\newcommand{\pgfl}{\mathtt{PGFL}}
\newcommand{\pdf}{\mathtt{pdf}}
\newcommand\numberthis{\addtocounter{equation}{1}\tag{\theequation}}
\newcommand{\x}{\mathbf{x}}
\newcommand{\y}{\mathbf{y}}
\renewcommand{\env@cases}[1][@{}l@{\quad}l@{}]{%
  \let\@ifnextchar\new@ifnextchar
  \left\lbrace
  \def\arraystretch{1.2}%
  \array{#1}%
}
\def\nb0{{\mathbf{0}}}
\def\nb1{{\mathbf{1}}}
\def\ncalB{{\mathcal{B}}}
\def\ncalE{{\mathcal{E}}}
\def\ncalR{{\mathcal{R}}}
\def\ncalZ{{\mathcal{Z}}}
\def\nbbE{{\mathbb{E}}}
\def\nbbP{{\mathbb{P}}}
\def\nbbR{{\mathbb{R}}}
\newtheorem{lemma}{Lemma}
\newtheorem{definition}{Definition}
\newtheorem{theorem}{Theorem}
\newtheorem{cor}{Corollary}
\def\argmax{\operatorname{arg~max}}
\def\E{\mathbb{E}}
\def\P{\mathbb{P}}
\def\p{p}
\def\sir{\mathtt{SIR}}
\begin{document}
\graphicspath{{./Figures/}}
\title{Meta Distribution for Downlink NOMA in Cellular Networks with 3GPP-inspired User Ranking}
\author{
Praful D. Mankar and Harpreet S. Dhillon
\thanks{The authors are with Wireless@VT, Bradley Department of Electrical and Computer Engineering, Virginia Tech, Blacksburg, VA. Email: \{prafuldm, hdhillon\}@vt.edu. This work was supported by the US National Science Foundation (NSF)
 under Grant CNS-1814477.
}
}

\maketitle
\thispagestyle{empty}
\pagestyle{empty}
\begin{abstract} 
This paper presents the  meta distribution analysis of the downlink two-user non-orthogonal multiple access (NOMA) in cellular networks. We propose a novel user ranking technique wherein the users from the cell center (CC) and cell edge (CE) regions are paired for the non-orthogonal transmission.
Inspired by how users are partitioned in 3GPP cellular models, the  CC and CE users are characterized based on the mean powers received from the serving and the dominant interfering BSs.  We demonstrate that the  proposed technique ranks users in an accurate order with distinct link qualities, which is imperative for the performance of NOMA system. The exact moments of the meta distributions for the CC and CE users under NOMA and orthogonal multiple access (OMA) are derived. In addition, we provide  tight beta distribution approximations for the meta distributions and exact expressions of the mean local delays and the cell throughputs for the NOMA and OMA cases. To the best of our knowledge, this is the first comprehensive analysis of NOMA using stochastic geometry with 3GPP-inspired user ranking scheme that depends upon {\em both} of the link qualities from the serving and dominant interfering BSs.
\end{abstract}
\begin{keywords}
Stochastic geometry, cellular networks, non-orthogonal multiple access, cell center user, cell edge user, meta distribution, Poisson point process. 
\end{keywords}
\section{Introduction}
\label{sec:Introduction}

%In traditional orthogonal multiple access (OMA) techniques, each base station (BS) schedules its associated users using orthogonal (in time, frequency or code) resources blocks (RBs). In these techniques, most of the resources are consumed by the users with poor channel conditions in order to maintain certain level of quality of service which leads to the spectral inefficiency.     
NOMA technique has received significant attention recently in the context of  5G cellular networks which, unlike the traditional OMA techniques, enables the BSs to serve more than one user using the same resource block (RB); see \cite{ding2017survey} and the references therein.  In NOMA, the transmitter superimposes multiple layers of messages at different power levels and the receiver decodes its intended message using successive interference cancellation  (SIC) technique. A given user first decodes and cancels the interference power resulting from the layers assigned to the users with weaker channel states using SIC and then decodes its own message. 
On the other hand, in OMA, generally the users with poor channel conditions consume most of the RBs in order to meet a certain level of quality of service which lowers the overall spectral efficiency of the system. However, the NOMA technique can meet the quality of service requirements for the users with poor channel conditions without lowering the spectral efficiency of the system by concurrently serving users with poor and better channel conditions using the same spectral resources.

NOMA is configured by ranking the users based on their link qualities which are characterized by path-losses, fading gains and inter-cell interference powers. 
However, incorporating user ranking techniques that depend on all the above components in the stochastic geometry-based system level analysis of downlink NOMA is challenging because of the correlation in the inter-cell interference powers received by the users in a given cell. 
Therefore, most of the existing works in this direction ignore this correlation and instead rank users in the order of their mean signal powers (i.e., link distances) so that the $i$-th closest user becomes the $i$-th strongest user. The set of users scheduled for the non-orthogonal transmission is usually termed as the {\em user cluster}.
The authors of \cite{ali2019downlink,ali2019downlinkLetter,salehi2018meta,salehi2018accuracy} analyzed $N$-ranked NOMA in cellular networks modeled using a Poisson point process (PPP). In \cite{ali2019downlink}, the downlink success probability is derived while forming the user cluster within the indisk of the Poisson-Voronoi (PV) cell. However, the resulting performance estimate may not be truly representative of the NOMA performance gains because users within the indisk of a PV cell will usually experience similar channel conditions and hence lack channel gain imbalance that results in the NOMA gains (see \cite{ding2016impact}).   
 The moments of the {\em meta distribution}, defined in \cite{Martin2016Meta} as the distribution of the successful transmission probability  of the typical link conditioned on the locations of BSs,  are derived for the downlink NOMA in \cite{salehi2018meta,ali2019downlinkLetter} and uplink NOMA in \cite{salehi2018meta}  by ranking users based on their link distances. However, \cite{salehi2018meta} ignores the joint decoding of the subset of layers associated with SIC.  
%Besides, \cite{salehi2018accuracy} shows that the link distance based and instantaneous signal power based ranking techniques rank the users in the same order with a probability that decreases with the increase of $N$ and the path-loss exponent. 
Nonetheless, assuming this distance-based ranking technique,  \cite{ali2019downlinkLetter,salehi2018meta,salehi2018accuracy} derived the ordered distance distributions of the clustered users while assuming that their link distances follow the distance distribution of the typical link (in the network) independently of each other.
%using the typical link (in the network) distance distribution independently for the link distance distribution of each user in the cluster. 
As implied above already, this ignores the fact that the user location in a PV cell is a function of the BS point process. A key unintended consequence of this approach is that it does not necessarily confine the user cluster in a PV cell, which is a significant approximation of the underlying setup  (see Fig. \ref{fig:illustration}, Middle and Right). 
The spectral efficiency of $K$-tier heterogeneous cellular networks is analyzed in \cite{liu2017non} wherein the smaller BSs serve their users using two-user NOMA with the distance-based ranking. 
Besides, \cite{zhang2017downlink} derives the outage probability for the downlink two-user NOMA cellular networks, modeled as a PPP, by ranking the users based on the channel gains normalized  using their received inter-cell  interference powers. Therein, the normalized gains are assumed to be independent and identically distributed (i.i.d.) which again ignores the fact that the link distances and the inter-cell interference powers associated with the users within the same PV cell are correlated.
\begin{figure*}[h]
 \centering
  \hspace{-2cm} \includegraphics[trim=.85cm 1cm .85cm .5cm, width=.35\textwidth]{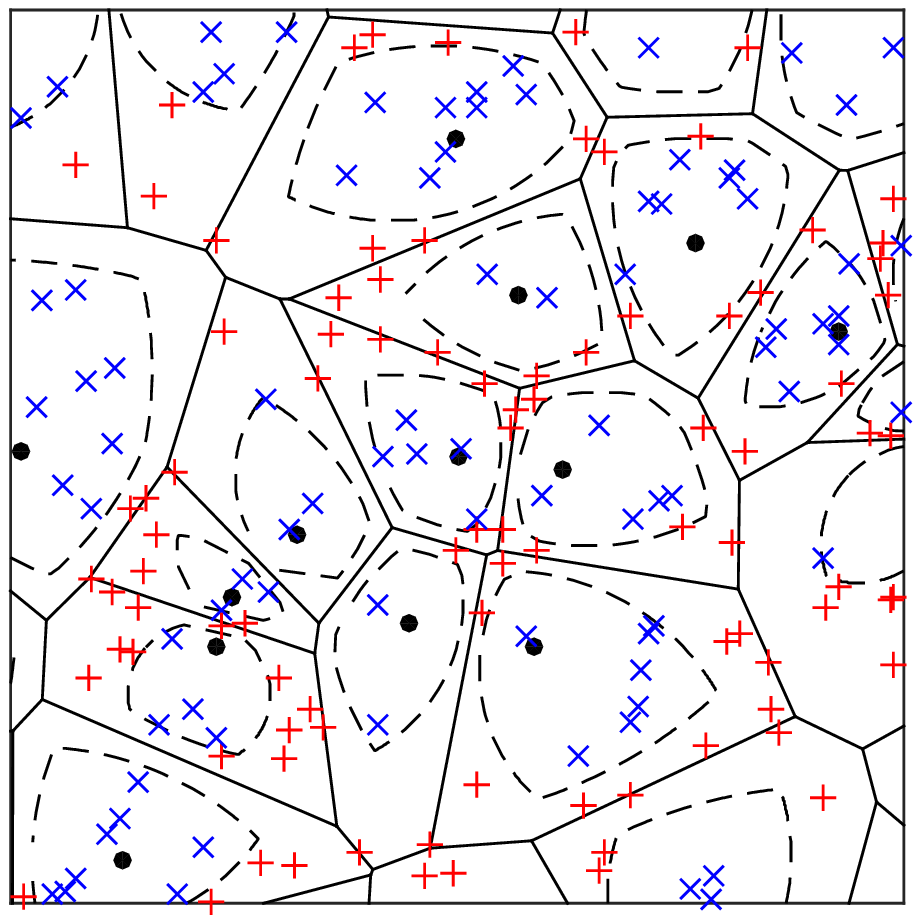}\hspace{-.5cm} 
\includegraphics[trim=.85cm 1cm .85cm .5cm, width=.35\textwidth]{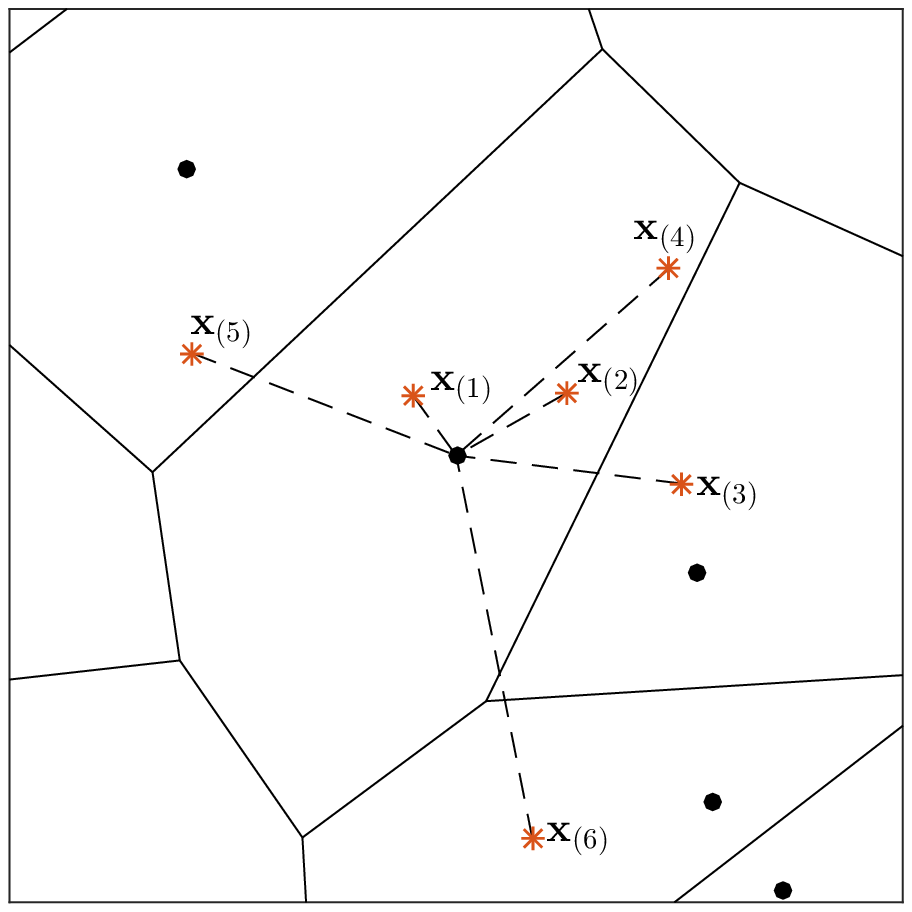}\hspace{-.1cm} 
\includegraphics[trim=.85cm 1cm .85cm .5cm, width=.35\textwidth]{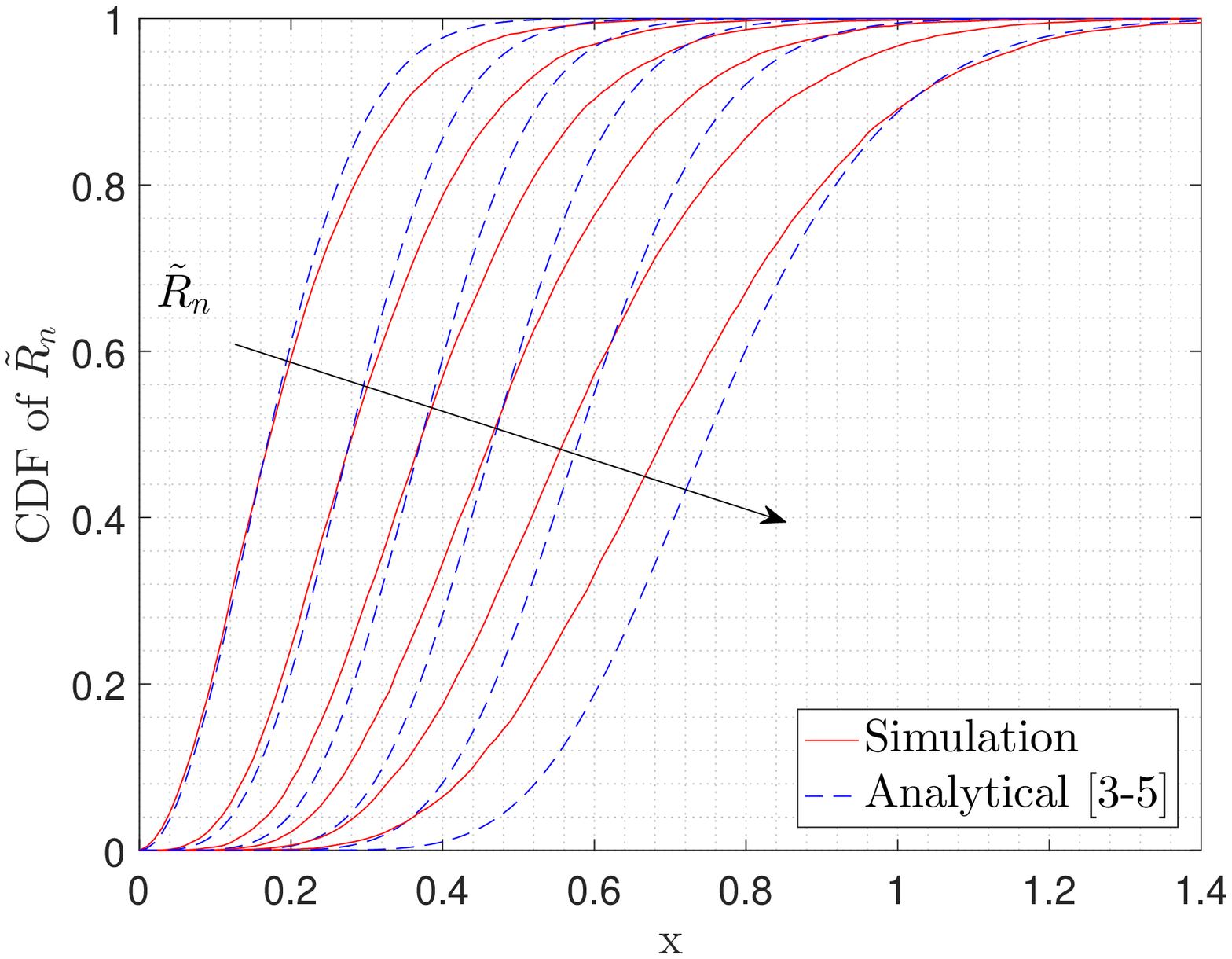}
\caption{Left: an illustration of the classification of the CC and CE users given in \eqref{eq:UserCategorization} for $\tau=0.7$. Middle:  an illustration of the user cluster from  \cite{ali2019downlinkLetter,salehi2018meta,salehi2018accuracy} for $N=6$ (and of the fact that it is not necessarily confined within the PV cell). Right: the distributions of the ordered link distances modeled in \cite{ali2019downlinkLetter,salehi2018meta,salehi2018accuracy} for $N=6$.
 The dot, cross, plus, and star marks correspond to BSs, CC users, CE users, and user cluster, respectively. The solid lines in the left and middle figures correspond to the PV cell boundaries, and the dashed lines in the left figure correspond to the boundaries between CC and CE regions.}
\label{fig:illustration}
\end{figure*}
%The authors of \cite{wildemeersch2014successive} removes the interference power from the strongest interferer dominant using SIC until the signal of interest is decoded successfully in the uplink. 

%Nevertheless, ranking users based on their link distances does not really capture the actual order of link qualities when the BS locations are randomly distributed. For instance, in a PPP modeled network, the typical BS often needs  to serve a closer user through poor channel  and a farther user  through better channel. Besides, this approach does not guarantee that the channel conditions of these users are distinctive which is important to obtain performance gains in  NOMA \cite{ding2016impact}. 

A more reasonable way of accurately ranking the users is to form the user cluster by selecting users from distinct regions (in order to ensure distinct link qualities for the co-scheduled users). These regions can be constructed based on the ratio of the mean powers (i.e., path-losses) received from the serving and dominant interfering BSs. For instance, the PV cell can be divided into the center (CC) region, wherein the ratio is above a threshold $\tau$, and the cell edge (CE) region, wherein the ratio is below $\tau$. 
A similar approach of classifying users as the CC and CE users is also used in 3GPP LTE to study schemes such as soft frequency reuse (SFR) \cite{dominique2010self}. Inspired by this, we characterize the CC and CE users based on their path-losses from the serving and dominant interfering BSs to pair them for the two-user NOMA system. This way of user pairing is meaningful because of two reasons: 1) order statistic of received signals is dominated by the  path-losses \cite{wildemeersch2014successive},  and 2) the dominant interfering BS contributes most of the interference power  in the PPP setting \cite{Vishnu2017UAV}. 
%
%\begin{figure}[t]
% \centering
% \includegraphics[trim=.75cm 1.5cm .75cm .5cm, width=.4\textwidth]{Illustration_ProposedCriteria.eps}
%\label{fig:CovP_CC_CE}
%\caption{Illustration of the classification of the CC and CE users given in \eqref{eq:UserCategorization} for $\tau=0.7$. The dot, cross and plus marks correspond to BSs, CC UEs and CE UEs, respectively. The solid lines correspond to the cell boundaries and the dashed lines correspond to the boundaries between CC and CE regions.}\vspace{-.5cm}
%\label{fig:illustration}
%\end{figure}

Based on the above pairing technique, we analyze the meta distribution for the downlink NOMA. We first derive the exact moments of the meta distributions for the typical CC and CE users under NOMA. We also provide  tight beta distribution approximations for the meta distributions of the CC and CE users. In addition, the meta distribution analysis for the CC and CE users under OMA is also presented. 
Our results concretely demonstrate that NOMA based on the proposed user pairing technique results in significantly higher  CE user transmission rate and the cell spectral efficiency compared to  OMA. The OMA analysis can also be directly used to analyze other techniques focused on the  performance improvement of the CE user, such as the SFR.\vspace{-.3cm}
\section{System Model}
\label{sec:SystemModel}
\subsection{Network Modeling and User Classification}
We model the BS and user locations using two independent homogeneous PPPs $\Phi$ and $\Psi$ with densities $\lambda_B$ and $\lambda_U$, respectively.  Without loss of generality, we consider that the typical user of $\Psi$ is located at the origin $o$. 
While assuming the strongest BS association policy, the {\em serving link distance} (i.e. distance between the typical user and its serving BS) is given by $R_o=\|\x_o\|$ where $\x_o=\argmax_{\x\in\Phi}\|\x\|^{-\alpha}$ and $\alpha>2$ is the path-loss exponent. Let $R_d=\|\x_d\|$ be the distance from the typical user to its dominant interfering BS where $\x_d=\argmax_{\x\in\Phi_I}\|\x\|^{-\alpha}$ and $\Phi_I=\Phi\setminus\{\x_o\}$ is the point process of the interfering BSs with respect to the typical user. Now, we classify the typical user as either the CC or the CE user based on its distances (i.e. the path-losses) from the serving and dominant interfering BSs as 
\begin{equation}
\text{User}=\begin{cases}
\text{CC user} & \text{if}~ \frac{R_o}{R_d}\leq\tau,\\
\text{CE user} & \text{otherwise},
\end{cases}
\label{eq:UserCategorization}
\end{equation}
where $\tau$ is the threshold which defines the boundary between the CC and CE regions \cite{Praful2018LoadAware}. 
Fig. \ref{fig:illustration} (Left) illustrates the classification of the CC and CE users given in Eq. \eqref{eq:UserCategorization}. From the illustration, it is clear that the criteria given in Eq. \eqref{eq:UserCategorization} accurately preserves the
CE regions wherein the signal-to-intercell-interference ratio ($\sir$) is expected to be lower. As a comparison, Fig. \ref{fig:illustration} (Middle) illustrates a realization of a user cluster that results from the distance-based ranking technique of \cite{ali2019downlinkLetter,salehi2018meta,salehi2018accuracy}. As is clearly evident from the figure, the user cluster is not confined to the PV cell, which is an unintended consequence of ignoring the correlation in the locations of the clustred users. This can also be verified by comparing the distributions of the ordered distances used in \cite{ali2019downlinkLetter,salehi2018meta,salehi2018accuracy} with those obtained from the simulations. This comparison is given in Fig. \ref{fig:illustration} (Right) wherein $\tilde{R}_n$ represents the link distance of the $n$-th closest user from the BS.   \vspace{-.4cm}
%\begin{figure}[t]
% \centering
%  \includegraphics[trim=.75cm 1.5cm .75cm .5cm, width=.4\textwidth]{Illustration_Comp.eps}
%\caption{Illustration of the typical user cluster due to the unintended consequence of the link distance distributions of the clustered users considered in \cite{ali2019downlinkLetter,salehi2018meta,salehi2018accuracy} for $N=6$. The dot and star marks correspond to BSs and user cluster, respectively. }\vspace{-.5cm}
%\label{fig:UserCluster_Literature}
%\end{figure}
\subsection{NOMA Transmission for CC and CE Users}
We assume non-orthogonal transmissions for the CC and CE users from the same cell. Each BS is assumed to transmit signal superimposed of two layers corresponding to the messages for the CC and CE users. Henceforth, the layers intended for the CC and CE users are referred to as the $\mathtt{L_{C}}$ and  $\mathtt{L_{E}}$ layers, respectively.
The $\mathtt{L_{C}}$ and $\mathtt{L_{E}}$ layers are encoded at power levels of $\theta P$ and $(1-\theta)P$, respectively, where $P$ is the transmission power per RB and $\theta \in(0,1)$. Without loss of generality, we assume $P=1$ (since we ignore thermal noise). Usually, NOMA allocates more power to the weaker user (i.e., CE user) so that it receives smaller intra-cell interference power compared to the desired signal power. 
Hence, the CC user first decodes the $\mathtt{L_{E}}$ layer while treating the power assigned to the $\mathtt{L_{C}}$ layer as interference. After successfully decoding the $\mathtt{L_E}$ layer, the CC user cancels its signal using SIC from the received signal and then decodes the $\mathtt{L_C}$ layer. 
Thus, the $\sir$s of the typical user, when being a CC user, for decoding the $\mathtt{L_E}$ and $\mathtt{L_C}$ layers are given by
\begin{align}
\sir_{ce}&=\frac{h_{\x_o}R_o^{-\alpha}(1-\theta)}{h_{\x_o}R_o^{-\alpha}\theta+I_{\Phi_I}}~
\text{and~} \sir_{cc}&=\frac{h_{\x_o}R_o^{-\alpha}\theta}{I_{\Phi_I}},\nonumber
\end{align}
respectively, where $I_{\Phi_I}=\sum_{\x\in\Phi_I}h_{\x}\|\x\|^{-\alpha}$ and $h_\x$ are the channel fading gains which are i.i.d. and follow  unit mean exponential distribution, i.e., $h_{\x}\sim \exp(1)$. 

On the other hand, the CE user decodes $\mathtt{L_E}$ layer while treating the power assigned to the $\mathtt{L_C}$ layer as interference. Thus, the effective $\sir$ of the typical user, when being a CE user, for decoding the $\mathtt{L_E}$ layer becomes
\begin{equation}
\sir_{ee}=\frac{h_{\x_o}R_o^{-\alpha}(1-\theta)}{h_{\x_o}R_o^{-\alpha}\theta+I_{\Phi_I}}.\nonumber
\end{equation}\vspace{-.75cm}
\subsection{Meta Distribution for the NOMA System}
The success probabilities for the CC and CE users are defined as the probabilities that the typical CC and CE users are able to decode their intended messages. 
%\begin{definition}[Success Probability]
%The success probabilities for the typical CC user and the typical CE user are defined as
%\begin{align}
%\p_c(\beta_c,\beta_e)&=\P[\sinr_{cc}\geq \beta_c,\sinr_{ce}\geq \beta_e]\\
%\text{and}~\p_e(\beta_e)&=\P[\sinr_{ce}\geq \beta_e]
%\end{align}
%where $\beta_c$ and $\beta_e$ be the $\sinr$ thresholds corresponding to the CC and CE layers target transmission rates, respectively.
%\end{definition}
%Let $\p_c(\beta_c,\beta_e|\Phi)$ and $\p_e(\beta_e|\Phi)$ be the conditional success probabilities of the CC user and the CE user for given $\Phi$, respectively. 
While this allows to determine the mean success probability of the typical CC and CE users, it does not provide any information on the disparity in the link performance of the CC and CE users across the network.
That said, the conditional success probabilities can be used to acquire more fine-grained information on the disparity in the link performance of these users. The distribution of the conditional success probability is referred to as the meta distribution \cite{Martin2016Meta}. The meta distribution for the CC/CE user can be used to answer questions like what percentage of the CC/CE users can establish their links with the transmission reliability above predefined threshold for given $\sir$ threshold. Thus, building on the  definition of the meta distribution of the $\sir$ in  \cite{Martin2016Meta}, we define the meta distributions for the CC and CE users  under NOMA as below.
\begin{definition}[Meta distribution]
The meta distribution of the typical CC user's success probability is defined as 
\begin{equation}
\bar{F}_{\text{cc}}(\beta_c,\beta_e;x)=\P[\p_c(\beta_c,\beta_e\mid \Phi)>x],
\end{equation}
and the meta distribution of the typical CE user's success probability is defined as 
\begin{equation}
\bar{F}_{\text{ce}}(\beta_e;x)=\P[\p_e(\beta_e\mid\Phi)>x],
\end{equation}
where $x\in[0,1]$, $\beta_c$ and $\beta_e$ are the $\sir$ thresholds corresponding to the $\mathtt{L_C}$ and $ \mathtt{L_E}$ layers, respectively. Further, 
 $p_c(\beta_c,\beta_e\mid\Phi)=\P[\sir_{cc}\geq \beta_c,\sir_{ce}\geq \beta_e\mid \Phi]$ and $p_e(\beta_e\mid \Phi)=\P[\sir_{ee}\geq \beta_e\mid \Phi]$ are conditional success probabilities of the typical CC and CE users, respectively. 
\end{definition}
Note that the meta distribution is measured for the typical CC/CE user conditioned on its location at the origin.\vspace{-.3cm}
\section{Meta Distribution Analysis for CC and CE users under NOMA and OMA}
The key intermediate step in the meta distribution analysis for the CC user (CE user) is the joint distribution of the serving  link distance $R_o=\|\x_o\|$ and  the interfering BSs' distances $\|\x_i\|$, $\x_i\in\Phi_I$, under the condition of $R_o\leq R_d\tau$ ($R_o>R_d\tau$). For this, we first need to obtain the joint probability density functions ($\pdf$s) of $R_o$ and $R_d$ for the CC  and CE users which are presented in the following lemma.
\begin{lemma}
\label{lemma:DistanceDistribution}
The probabilities of the typical user being the CC and CE users are equal to $\tau^2$ and $1-\tau^2$, respectively.
The joint $\pdf$ of $R_o$ and $R_d$ for the CC user is 
\begin{equation}
f^{\text{cc}}_{R_o,R_d}(r_o,r_d)=\frac{(2\pi\lambda_B)^2}{\tau^2}r_or_d\exp\left(-\pi\lambda_B r_d^2\right),
\label{eq:pdf_RoRd_CC}
\end{equation}
for $r_d\geq \frac{r_o}{\tau}$ and $r_o\geq 0$. The joint $\pdf$ of $R_o$ and $R_d$ for the CE user is 
\begin{equation}
f^{\text{ce}}_{R_o,R_d}(r_o,r_d)=\frac{(2\pi\lambda_B)^2}{1-\tau^2}r_or_d\exp\left(-\pi\lambda_B r_d^2\right),
\label{eq:pdf_RoRd_CE}
\end{equation}
for $\frac{r_o}{\tau}>r_d \geq r_o$ and $r_o\geq 0$.
\end{lemma}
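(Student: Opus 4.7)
The proof rests on a single ingredient from PPP theory: the joint density of the distances to the first and second nearest points of $\Phi$. Since the path-loss is strictly decreasing in distance, the dominant interferer $\x_d$ is the nearest point of $\Phi_I$, and hence $R_d$ is simply the distance to the second-nearest point of $\Phi$. The plan is therefore: (i) derive the unconditional joint $\pdf$ of $(R_o,R_d)$ via the PPP void/independence property, (ii) obtain $\P[\mathrm{CC}]=\tau^2$ by integrating this joint density over $\{r_o\le\tau r_d\}$, and (iii) read off the two conditional $\pdf$s by Bayes' rule, carefully tracking the supports imposed by $R_o\le R_d$ together with the CC/CE event.

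For step (i), I would use the standard infinitesimal argument. The event $\{R_o\in[r_o,r_o+\mathrm{d}r_o],\,R_d\in[r_d,r_d+\mathrm{d}r_d]\}$ with $0\le r_o\le r_d$ decomposes into four independent PPP events: no point in $B(o,r_o)$, one point in the annulus of area $2\pi r_o\,\mathrm{d}r_o$, no point in $B(o,r_d)\setminus B(o,r_o+\mathrm{d}r_o)$, and one point in the annulus of area $2\pi r_d\,\mathrm{d}r_d$. Multiplying the corresponding Poisson probabilities and letting the increments vanish yields
\begin{equation}
f_{R_o,R_d}(r_o,r_d)=(2\pi\lambda_B)^2\,r_o r_d\,e^{-\pi\lambda_B r_d^2},\qquad 0\le r_o\le r_d,\nonumber
\end{equation}
where the $e^{-\pi\lambda_B r_o^2}$ and $e^{-\pi\lambda_B(r_d^2-r_o^2)}$ factors telescope.

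For step (ii), I would compute $\P[\mathrm{CC}]=\P[R_o\le\tau R_d]$ by integrating the above density over $r_o\in[0,\tau r_d]$. The inner integral gives $\tfrac{1}{2}\tau^2 r_d^2$, and the outer one reduces to $2\pi^2\lambda_B^2\tau^2\int_0^\infty r_d^3 e^{-\pi\lambda_B r_d^2}\,\mathrm{d}r_d=\tau^2$, after using $\int_0^\infty r^3 e^{-ar^2}\mathrm{d}r=1/(2a^2)$ with $a=\pi\lambda_B$. Hence $\P[\mathrm{CE}]=1-\tau^2$.

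Step (iii) is then just conditioning: $f^{\mathrm{cc}}_{R_o,R_d}(r_o,r_d)=f_{R_o,R_d}(r_o,r_d)\mathbbm{1}\{r_o\le\tau r_d\}/\tau^2$ and $f^{\mathrm{ce}}_{R_o,R_d}(r_o,r_d)=f_{R_o,R_d}(r_o,r_d)\mathbbm{1}\{\tau r_d<r_o\le r_d\}/(1-\tau^2)$, which match the claimed expressions and their stated supports (note that $\tau<1$ makes $r_o\le r_d$ automatic in the CC case, whereas in the CE case the upper bound $r_d\le r_o/\tau$ and the lower bound $r_d\ge r_o$ must both be retained). There is no real obstacle here; the only point that requires care is bookkeeping of the joint support under the CC/CE split, since writing the CE support as $r_o\le r_d<r_o/\tau$ (rather than omitting the implicit $r_o\le r_d$) is what makes the conditional density integrate to one.
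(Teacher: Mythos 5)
Your proposal is correct and follows essentially the same route as the paper: obtain the unconditional joint $\pdf$ $f_{R_o,R_d}(r_o,r_d)=(2\pi\lambda_B)^2 r_o r_d e^{-\pi\lambda_B r_d^2}$, integrate it over $\{r_o\le\tau r_d\}$ to get $\P[\mathrm{CC}]=\tau^2$, and then condition to read off the two densities with their supports. The only difference is that you derive the joint nearest/second-nearest distance density from the PPP void probabilities, whereas the paper simply cites it from \cite{Dhillon2013NonUni}; your support bookkeeping in the CE case matches the paper's stated region $r_o\le r_d<r_o/\tau$.
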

\begin{proof}
The joint $\pdf$ of $R_o$ and $R_d$ for the typical user can be written  as \cite{Dhillon2013NonUni}
\begin{align}
f_{R_o,R_d}(r_o,r_d)&=(2\pi\lambda_B)^2r_or_d\exp(-\pi\lambda_B r_d^2),
\label{eq:pdf_RoRd}
\end{align}
for $r_d\geq r_o\geq 0$. Using Eq. \eqref{eq:pdf_RoRd}, the probability of the typical user being the CC user can be obtained as 
\begin{align}
\P\left[R_o\leq R_d\tau\right]&=(2\pi\lambda_B)^2\int\limits_0^\infty\int\limits_0^{ r_d\tau}r_or_d\exp(-\pi\lambda_B r_d^2){\rm d}r_o{\rm d}r_d\nonumber\\
&=\tau^2.
\label{eq:Probabilit_CC}
\end{align} 
Hence, the probability of the typical user being the CE user becomes $1-\tau^2$.
Thus, using Eqs. \eqref{eq:pdf_RoRd} and \eqref{eq:Probabilit_CC} along with the definiations of the CC and CE user given by Eq. \eqref{eq:UserCategorization}, we obtain the final expressions given in Eqs. \eqref{eq:pdf_RoRd_CC} and  \eqref{eq:pdf_RoRd_CE}. 
\end{proof}
 In the following subsections, we first derive the moments of the meta distributions for the CC and CE users under the NOMA case which will be later used to analyze the OMA case, derive a tight approximation for the meta distribution, and determine the mean local delay and the cell throughput.\vspace{-.35cm}
\subsection{Meta Distribution for CC Users under NOMA}
Since the CC user needs to jointly decode the $\mathtt{L_C}$ and $\mathtt{L_E}$ layers  for the successful reception, the successful reception event for the CC user is given by
\begin{align}
\ncalE_c&=\{\sir_{cc}\geq\beta_c\}\cap\{\sir_{ce}\geq\beta_e\}\nonumber\\
%&=\left\{h_{\x_o}\geq R_o^\alpha(I_{\Phi_I}+\tilde{\sigma}_N^2)\frac{\beta_c}{\theta}\right\}\cap\left\{h_{\x o}\geq R_o^\alpha I_{\Phi_I}\frac{\beta_e}{1-\theta(1+\beta_e)}\right\}\nonumber\\
&=\left\{h_{\x_o}\geq R_o^\alpha I_{\Phi_I}\chi_c\right\},
\label{eq:SuccessEevent_CC}
\end{align}
where $\chi_c=\max\left\{\frac{\beta_c}{\theta},\frac{\beta_e}{1-\theta(1+\beta_e)}\right\}$. It is easy to interpret that the interference due to non-orthogonal transmission reduces the effective transmission power for decoding the $\mathtt{L_E}$ layer from  $(1-\theta)$ to $\min\{\frac{\beta_e}{\beta_c}\theta ,1-\theta(1+\beta_e)\}$,  which decreases the chance of successful transmission. 
%The {\em necessary conditions} for NOMA can be determined by ignoring the inter-cell interference in $\sir_{cc}>\beta_c$ and $\sir_{ce}>\beta_e$ as 
%\begin{equation}
%\theta>0 \text{ and } 1-\theta(1+\beta_e) > 0. 
%\label{eq:NC_NOMA}
%\end{equation}
Since it is difficult to directly derive the meta distribution \cite{Martin2016Meta}, we derive the $b$-th moment of the meta distribution for the typical CC user in the following theorem.
\begin{theorem}
\label{thm:MetaDisMoment_CC}
The $b$-th moment of the meta distribution for the typical CC user under NOMA is 
\begin{align}
M_b^{\text{cc}}(\chi_c,\tau)&=\frac{1}{1+\tau^2\ncalZ_b^{\text{cc}}(\chi_c,\tau)}\label{eq:MetaDisMoment_CC_Sigma0}
\end{align}
where $\delta=\frac{2}{\alpha}$ and
\begin{equation}
\ncalZ_b^{\text{cc}}(\chi_c,\tau)=\chi_c^{\delta}\int_{\chi_c^{-\delta}\tau^{-2}}^\infty[1-(1+t^{-\frac{1}{\delta}})^{-b}]{\rm d}t.
\label{eq:Zc}
\end{equation}
\end{theorem}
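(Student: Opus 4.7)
The plan is to condition on the serving link distance $R_o$ and exploit the Rayleigh fading to convert the $b$-th moment into a PGFL computation over the interfering PPP, together with an indicator for the CC event.

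\textbf{Step 1 (Rewriting the conditional success probability).} Because $h_{\x_o}\sim\exp(1)$ and is independent of $\Phi$, the successful reception event \eqref{eq:SuccessEevent_CC} gives
\begin{equation*}
p_c(\beta_c,\beta_e\mid\Phi)=\P[h_{\x_o}\geq R_o^\alpha I_{\Phi_I}\chi_c\mid\Phi]=\prod_{\x\in\Phi_I}\frac{1}{1+R_o^\alpha\|\x\|^{-\alpha}\chi_c},
\end{equation*}
so $p_c^b=\prod_{\x\in\Phi_I}(1+R_o^\alpha\|\x\|^{-\alpha}\chi_c)^{-b}$.

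\textbf{Step 2 (Encoding the CC event via interferer exclusion).} Since $R_d$ is the distance to the nearest interferer, $\{R_o\leq R_d\tau\}=\{\Phi_I\cap B(o,R_o/\tau)=\emptyset\}=\prod_{\x\in\Phi_I}\mathbf{1}(\|\x\|>R_o/\tau)$. Combining with Step 1,
\begin{equation*}
p_c^b\,\mathbf{1}(R_o\leq R_d\tau)=\prod_{\x\in\Phi_I}\frac{\mathbf{1}(\|\x\|>R_o/\tau)}{(1+R_o^\alpha\|\x\|^{-\alpha}\chi_c)^b}.
\end{equation*}
By the definition of the conditional meta distribution $M_b^{\text{cc}}=\E[p_c^b\mathbf{1}(R_o\leq R_d\tau)]/\P[R_o\leq R_d\tau]$, and since Lemma~\ref{lemma:DistanceDistribution} gives $\P[R_o\leq R_d\tau]=\tau^2$, I only need to compute the numerator.

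\textbf{Step 3 (Applying the PGFL given $R_o$).} Conditional on $R_o=r_o$, the strongest BS association makes $\Phi_I$ a PPP of intensity $\lambda_B$ on $\{\|\x\|>r_o\}$. The PGFL yields
\begin{equation*}
\E\!\left[\prod_{\x\in\Phi_I}\frac{\mathbf{1}(\|\x\|>r_o/\tau)}{(1+r_o^\alpha\|\x\|^{-\alpha}\chi_c)^b}\,\Big|\,R_o\!=\!r_o\right]=\exp\!\left(-2\pi\lambda_B\!\int_{r_o}^{r_o/\tau}\!v\,\rmd v-2\pi\lambda_B\!\int_{r_o/\tau}^{\infty}\!\left[1-(1+r_o^\alpha v^{-\alpha}\chi_c)^{-b}\right]v\,\rmd v\right).
\end{equation*}
The first integral contributes $\pi\lambda_B r_o^2(\tau^{-2}-1)$. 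For the second, the substitution $t=v^2/(r_o^2\chi_c^{\delta})$ gives $v\,\rmd v=\tfrac{1}{2}r_o^2\chi_c^{\delta}\,\rmd t$ and $r_o^\alpha v^{-\alpha}\chi_c=t^{-1/\delta}$, with the lower limit becoming $\chi_c^{-\delta}\tau^{-2}$; hence the second integral equals $\tfrac{1}{2}r_o^2\ncalZ_b^{\text{cc}}(\chi_c,\tau)$ as defined in \eqref{eq:Zc}.

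\textbf{Step 4 (Integrating out $R_o$).} Using $f_{R_o}(r_o)=2\pi\lambda_B r_o e^{-\pi\lambda_B r_o^2}$, the numerator becomes
\begin{equation*}
\E[p_c^b\mathbf{1}(R_o\leq R_d\tau)]=2\pi\lambda_B\int_0^\infty r_o\exp\!\left(-\pi\lambda_B r_o^2\big[1+(\tau^{-2}-1)+\ncalZ_b^{\text{cc}}\big]\right)\rmd r_o=\frac{1}{\tau^{-2}+\ncalZ_b^{\text{cc}}}=\frac{\tau^2}{1+\tau^2\ncalZ_b^{\text{cc}}}.
\end{equation*}
Dividing by $\tau^2$ delivers \eqref{eq:MetaDisMoment_CC_Sigma0}.

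The step most likely to look unmotivated is Step~2, where the CC event must be folded into the PPP product; once that reformulation is made, the remaining manipulations are a routine PGFL computation plus a standard change of variable, and the clean closed form drops out because the term $(\tau^{-2}-1)$ from the exclusion ball combines with the unity from the marginal of $R_o$ to produce $\tau^{-2}$.
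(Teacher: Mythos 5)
Your proof is correct and follows essentially the same route as the paper: condition on $R_o$, apply the PGFL of $\Phi_I$ with the interferer-exclusion ball of radius $R_o/\tau$, perform the substitution $t=v^2/(r_o^2\chi_c^\delta)$, and integrate out $R_o$. The only (equivalent) bookkeeping difference is that you keep the unconditional density of $R_o$ together with an explicit void-probability term for the annulus and normalize by $\P[R_o\leq R_d\tau]=\tau^2$ at the end, whereas the paper absorbs both into the conditional marginal $\pdf$ $f^{cc}_{R_o}$ from Lemma~\ref{lemma:DistanceDistribution}.
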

\begin{proof}
The success probability of the typical CC user conditioned on $\Phi$ is 
\begin{align*}
\p_c(\beta_c,\beta_e\mid \Phi)&=\P\left(\ncalE_c\mid\Phi\right)\stackrel{(a)}{=}\prod\limits_{\x\in\Phi_I}\frac{1}{1+R_o^\alpha \chi_c \|\x\|^{-\alpha}},
\end{align*}
where step (a) follows from the independence of the fading gains. 
Hence, the $b$-th moment can be determined as  
\begin{align*}
&M^{\text{cc}}_b(\chi_c,\tau)=\nbbE_\Phi\left[\prod\limits_{\x\in\Phi_I}\frac{1}{(1+R_o^\alpha \chi_c \|\x\|^{-\alpha})^b}\right]\\
\begin{split}
&\stackrel{(a)}{=}\E_{R_o}\exp\left(-\lambda_B\hspace{-.15cm}\int\limits_{\nbbR^2\setminus\ncalB_{o}(\frac{R_o}{\tau})}\hspace{-.15cm}\left[1-(1+R_o^\alpha \chi_c \|\x\|^{-\alpha})^{-b}\right]{\rm d}\x\right)
\end{split}\\
&=\E_{R_o}\exp\left(-\pi\lambda_B R_o^2 \chi_c^{\delta}\hspace{-.1cm}\int\limits_{\frac{\chi_c^{-\delta}}{\tau^{2}}}^\infty\hspace{-.1cm}\left[1-(1+t^{-\frac{1}{\delta}})^{-b}\right]{\rm d}t\right),\numberthis
\label{eq:MetaDisMomentCondRo_CC}
\end{align*}
where step (a) follows by using probability generating functional ($\pgfl$) of the PPP $\Phi_I$ of density $\lambda_B$ outside of disk $\ncalB_o\left({R_o}/{\tau}\right)$ as  all of the interfering BSs for the CC user must be farther than ${R_o}/{\tau}$. Now using Eq. \eqref{eq:pdf_RoRd_CC}, we obtain the marginal $\pdf$ of $R_o$ for the CC user as 
\begin{equation}
f^{{cc}}_{R_o}(r_o)=\frac{2\pi\lambda_B}{\tau^2}r_o\exp\left(-\pi\lambda_B\frac{r_o^2}{\tau^2}\right), ~\text{for}~r_o\geq 0.
\label{eq:pdfRo_CC} 
\end{equation}
Finally, using Eqs. \eqref{eq:MetaDisMomentCondRo_CC} and   \eqref{eq:pdfRo_CC}, we obtain Eq. \eqref{eq:MetaDisMoment_CC_Sigma0}. This completes the proof.
\end{proof}\vspace{-.65cm}
\subsection{Meta Distribution for CE Users under NOMA}
The CE user decodes its message while treating the signal intended for the CC user as interference. Thus, the successful transmission event for the CE user is given by
\begin{align}
\ncalE_e&=\{\sir_{ee}\geq \beta_e\}=\left\{h_{\x_o}\geq R_o^\alpha I_{\Phi_I}\chi_e\right\},
\label{eq:SuccessEevent_CE}
\end{align}
where $\chi_e=\frac{\beta_e}{1-\theta(1+\beta_e)}$. 
In the following theorem, we derive the $b$-th moment of the  meta distribution for the CE user.
\begin{theorem}[Moments for CE user]
\label{thm:MetaDisMoment_CE}
The $b$-th moment of the meta distribution for the typical CE user under NOMA is
\begin{align}
M_b^{\text{ce}}(\chi_e,\tau)=\frac{1}{1-\tau^2}\int_{\tau^2}^1\frac{(1+v^{\frac{1}{\delta}}\chi_e)^{-b}}{(1+v\ncalZ_b^{\text{ce}}(\chi_e,v^{-1}))^2}{\rm d}v,
\label{eq:MetaDisMoment_CE_Sigma0}
\end{align}
\begin{equation}
\text{where~}\ncalZ_b^{\text{ce}}\left(\chi_e,a\right)=\chi_e^{\delta}\int_{\chi_e^{-\delta}a}^\infty[1-(1+t^{-\frac{1}{\delta}})^{-b}]{\rm d}t.
\label{eq:Ze}
\end{equation}
\end{theorem}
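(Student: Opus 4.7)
The proof will follow the template of Theorem~\ref{thm:MetaDisMoment_CC}, but with a crucial structural difference: for the CE user we have $R_o \leq R_d < R_o/\tau$, so the dominant interferer $\x_d$ lies \emph{inside} the void ball $\ncalB_o(R_o/\tau)$ used in the CC proof and must be treated separately from the remainder of $\Phi_I$. The plan is therefore to split the interference contribution into (i) the explicit factor coming from $\x_d$ at distance $R_d$ and (ii) a PGFL factor coming from $\Phi_I\setminus\{\x_d\}$, which by definition of the dominant interferer is a PPP of intensity $\lambda_B$ on $\nbbR^2\setminus\ncalB_o(R_d)$.

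First I would write, using $h_{\x_o}\sim\exp(1)$ and fading independence,
\begin{equation*}
\p_e(\beta_e\mid\Phi)=\prod_{\x\in\Phi_I}\frac{1}{1+R_o^\alpha\chi_e\|\x\|^{-\alpha}},
\end{equation*}
raise this to the $b$-th power, and take the expectation conditional on $(R_o,R_d)$. Isolating the dominant-interferer factor $(1+R_o^\alpha\chi_e R_d^{-\alpha})^{-b}$ and applying the PGFL of the reduced PPP outside $\ncalB_o(R_d)$, a routine change of variable $t=\|\x\|^2\chi_e^{-\delta}R_o^{-2}$ converts the resulting exponent into exactly $-\pi\lambda_B R_o^2\,\ncalZ_b^{\text{ce}}(\chi_e,R_d^2/R_o^2)$, matching the definition in \eqref{eq:Ze} with $a=R_d^2/R_o^2$.

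Next I would evaluate the outer expectation against the joint density $f^{\text{ce}}_{R_o,R_d}$ from \eqref{eq:pdf_RoRd_CE} on the strip $r_o\le r_d<r_o/\tau$. The clean way is to change variables to $v=(R_o/R_d)^2\in(\tau^2,1]$ and $w=R_d^2\in(0,\infty)$, whose Jacobian is $dr_o\,dr_d=\frac{1}{4\sqrt{v}}\,dv\,dw$ and which gives $r_or_d=w\sqrt{v}$ together with $(R_o/R_d)^\alpha=v^{1/\delta}$. After this substitution the double integral factorizes: the $v$-integral carries the factor $(1+v^{1/\delta}\chi_e)^{-b}$ and a denominator $(1+v\,\ncalZ_b^{\text{ce}}(\chi_e,v^{-1}))^2$, and the $w$-integral becomes $\int_0^\infty w\exp\bigl(-\pi\lambda_B w(1+v\,\ncalZ_b^{\text{ce}}(\chi_e,v^{-1}))\bigr)\,dw=[\pi\lambda_B(1+v\,\ncalZ_b^{\text{ce}}(\chi_e,v^{-1}))]^{-2}$. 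The $\pi\lambda_B$ factors cancel against the $(2\pi\lambda_B)^2/4$ prefactor from the joint density and Jacobian, leaving precisely the claimed expression \eqref{eq:MetaDisMoment_CE_Sigma0}.

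The main obstacle is conceptual rather than computational: one must not simply copy the CC derivation and exclude a disk of radius $R_o/\tau$, because in the CE event the dominant interferer is strictly inside that disk and contributes a non-negligible (indeed typically dominant) term. Once this separation is in place, the remaining steps are a PGFL computation on the complement of $\ncalB_o(R_d)$ followed by the $(v,w)$ substitution; the latter is what produces the integral-over-$v$ form in the statement rather than the closed-form single fraction obtained for the CC user.
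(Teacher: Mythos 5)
Your proposal is correct and follows essentially the same route as the paper's proof: isolate the dominant interferer's factor $(1+R_o^\alpha R_d^{-\alpha}\chi_e)^{-b}$, apply the PGFL of $\Phi_I\setminus\{\x_d\}$ outside $\ncalB_o(R_d)$ to produce $\exp(-\pi\lambda_B R_o^2\,\ncalZ_b^{\text{ce}}(\chi_e,R_d^2/R_o^2))$, and then integrate against the joint density \eqref{eq:pdf_RoRd_CE} with the substitution $v=(R_o/R_d)^2$. Your $(v,w)$ change of variables is just a reparametrization of the paper's $(v,u)$ with $u=r_d$, and the prefactor bookkeeping checks out.
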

\begin{proof}
%Let the dominant interfering BS is at $\x_d$ such that $\x_d=\arg\min_{\x_i\in\Phi_I}\|\x_i\|^{-\alpha}P$. Thus, we have $R_d=\|\x_d\|$. 
For given $R_d$, we can write $I_{\Phi_I}=h_{\x_d}R_d^{-\alpha}+I_{\tilde\Phi_I}$ where $I_{\tilde\Phi_I}=\sum_{\x\in\Phi_I\setminus\{\x_d\}}h_{\x}\|\x\|^{-\alpha}$. Therefore, the success probability of the typical CE user conditioned on $\Phi$ is
\begin{align*}
\p_e(\beta_e\mid\Phi)&=\P[\ncalE_e\mid\Phi]\\
&=\nbbP\left[h_{\x_o}>R_o^\alpha \chi_e(h_{\x_d}R_d^{-\alpha}+I_{\tilde{\Phi}_I})\right]\\
&\stackrel{(a)}{=}\frac{1}{1+R_o^\alpha R_d^{-\alpha}\chi_e}\prod\limits_{\x\in\tilde{\Phi}_I}\frac{1}{1+R_o^\alpha \chi_e \|\x\|^{-\alpha}},
\end{align*}
where step (a) follows from the independence of the channel fading gains. Hence, the $b$-th moment can be determined as $M_b^{\text{ce}}(\chi_e,\tau)$
\begin{align*}
&=\nbbE_\Phi\left[\frac{1}{(1+R_o^\alpha R_d^{-\alpha}\chi_e)^{b}}\prod\limits_{\x\in\tilde{\Phi}_I}\frac{1}{(1+R_o^\alpha \chi_e \|\x\|^{-\alpha})^b}\right]\\
&=\nbbE_{R_o,R_d}\left[\frac{1}{(1+R_o^\alpha R_d^{-\alpha}\chi_e)^b}\times\vphantom{\left[\prod\limits_{\y_i\in\tilde{\Phi}}\frac{1}{1+R_o^\alpha \chi_e \|\x_i\|^{-\alpha}}\mid R_o,R_1\right]}\right.\\
&\left.~~~~~~\nbbE_{\tilde{\Phi}_I}\left[\prod\limits_{\x\in\tilde{\Phi}}\frac{1}{(1+R_o^\alpha \chi_e \|\x\|^{-\alpha})^b}\mid R_o,R_1\right]\right]\\
&\stackrel{(a)}{=}\nbbE_{R_o,R_d}\left[\frac{1}{(1+R_o^\alpha R_d^{-\alpha}\chi_e)^b}\times\vphantom{\int_{\nbbR^2\setminus\ncalB_{o}(R_d)}}\right.\\
&\left.~~\exp\left(-\lambda_B\int_{\nbbR^2\setminus\ncalB_{o}(R_d)}\left[1-(1+R_o^\alpha \chi_e \|\x\|^{-\alpha})^{-b}\right]{\rm d}\x\right)\right]\\
&\stackrel{(b)}{=}\nbbE_{R_o,R_d}\left[\frac{\exp\left(-\pi\lambda_B R_o^2 \ncalZ_b^{\text{ce}}\left(\chi_e,\frac{R_d^2}{R_o^2}\right)\right)}{(1+R_o^\alpha R_d^{-\alpha}\chi_e)^b}\right]\\
&\stackrel{(c)}{=}\frac{2(\pi\lambda_B)^2}{1-\tau^2}\int_{\tau^2}^1\int_0^\infty\frac{u^3}{(1+v^{\frac{\alpha}{2}}\chi_e)^b}\\
&~~~~~~~~~~~~~~~~~~~~\exp\left(-\pi\lambda_B u^2(1+v\ncalZ_b^{\text{ce}}(\chi_e,v^{-1}))\right){\rm d}u{\rm d}v,
\end{align*}
where step (a) follows by using the $\pgfl$ of the PPP $\tilde{\Phi}_I$ of density $\lambda_B$ outside the disk $\ncalB_o\left(R_d\right)$ as all (other than the dominant) interfering BSs for the CE user must be farther than $R_d$.
Step (b) follows using the Cartesian-to-polar
coordinate conversion such that the term $\ncalZ_b^{ce}(\chi_e,(R_d/R_o)^2)$ is obtained as in Eq. \eqref{eq:Ze}. 
Step (c) follows using the joint $\pdf$ of $R_o$ and $R_1$ given in Eq. \eqref{eq:pdf_RoRd_CE} and the substitutions of $(r_o/r_d)^\alpha=v^{\frac{\alpha}{2}}$ and $r_d=u$. Further algebraic manipulations yield  Eq. \eqref{eq:MetaDisMoment_CE_Sigma0}. This completes the proof.
\end{proof}
The following corollary presents simplified expressions for the bounds on the $b$-th moment derived in Theorem \ref{thm:MetaDisMoment_CE}.
\begin{cor}
\label{cor:Sigma0_CE}
The $b$-th moment of the meta distribution for the typical CE user under NOMA can be bounded as 
\begin{align}
&\frac{1}{1-\tau^2}\int_{\tau^2}^1\frac{(1+v^{\frac{1}{\delta}}\chi_e)^{-b}}{(1+v\ncalZ_b^{\text{ce}}(\chi_e,1))^2}{\rm d}v\leq  M_b^{\text{ce}}(\chi_e,\tau)\nonumber\\
&~~~~~~~~~~~~\leq\frac{1}{1-\tau^2}\int_{\tau^2}^1\frac{(1+v^{\frac{1}{\delta}}\chi_e)^{-b}}{(1+v\ncalZ_b^{\text{ce}}(\chi_e,\tau^{-2}))^2}{\rm d}v
\label{eq:MetaDisMoment_CE_Sigma0_Bounds}
\end{align} 
where $\ncalZ_b^{\text{ce}}(\chi_e,a)$ is given in Eq. \eqref{eq:Ze}.
\end{cor}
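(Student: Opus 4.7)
The plan is to obtain the bounds directly from the exact expression in Theorem~\ref{thm:MetaDisMoment_CE} by sandwiching the $v$-dependent quantity $\ncalZ_b^{\text{ce}}(\chi_e,v^{-1})$ that appears in the denominator of \eqref{eq:MetaDisMoment_CE_Sigma0}. The first step is to establish monotonicity of $\ncalZ_b^{\text{ce}}(\chi_e,a)$ in its second argument: from the definition \eqref{eq:Ze}, the integrand $1-(1+t^{-1/\delta})^{-b}$ is non-negative for $b\ge 0$, while the lower limit $\chi_e^{-\delta}a$ is increasing in $a$. Hence $a\mapsto \ncalZ_b^{\text{ce}}(\chi_e,a)$ is non-increasing.

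Next, since the outer integral in \eqref{eq:MetaDisMoment_CE_Sigma0} is taken over $v\in[\tau^2,1]$, one has $v^{-1}\in[1,\tau^{-2}]$, and applying the monotonicity from the previous step yields the pointwise sandwich
\begin{equation*}
\ncalZ_b^{\text{ce}}(\chi_e,\tau^{-2})\;\le\;\ncalZ_b^{\text{ce}}(\chi_e,v^{-1})\;\le\;\ncalZ_b^{\text{ce}}(\chi_e,1).
\end{equation*}
Multiplying by $v>0$, adding $1$, squaring (all quantities positive), and reciprocating reverses the inequality and produces the corresponding two-sided bound on $(1+v\ncalZ_b^{\text{ce}}(\chi_e,v^{-1}))^{-2}$ in terms of the two endpoint values $\ncalZ_b^{\text{ce}}(\chi_e,1)$ and $\ncalZ_b^{\text{ce}}(\chi_e,\tau^{-2})$.

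Finally, the remaining factor $(1+v^{1/\delta}\chi_e)^{-b}/(1-\tau^2)$ is positive, so multiplying the pointwise inequality by it and integrating over $v\in[\tau^2,1]$ preserves the bounds and delivers \eqref{eq:MetaDisMoment_CE_Sigma0_Bounds}. No real obstacle is anticipated; the only step that deserves care is tracking the sign of the integrand in \eqref{eq:Ze} to make sure the direction of monotonicity is correct, which holds for the regime of $b$ relevant to the subsequent beta-distribution fitting and local-delay computations.
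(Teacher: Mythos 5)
Your argument for $b>0$ is exactly the paper's: show that $a\mapsto\ncalZ_b^{\text{ce}}(\chi_e,a)$ is non-increasing because the integrand of \eqref{eq:Ze} is non-negative and the lower limit grows with $a$, sandwich $\ncalZ_b^{\text{ce}}(\chi_e,v^{-1})$ between its values at $a=1$ and $a=\tau^{-2}$ for $v\in[\tau^2,1]$, and push the resulting pointwise inequality through the positive integrand of \eqref{eq:MetaDisMoment_CE_Sigma0}. That part is correct and complete.

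The gap is your dismissal of $b<0$. The corollary is not invoked only for positive moments: the paper explicitly uses it with $b=-1$ to bound the CE user's mean local delay, so the negative-$b$ case is part of the claim and cannot be waved off as irrelevant. For $b<0$ the integrand $1-(1+t^{-1/\delta})^{-b}$ in \eqref{eq:Ze} is non-positive, so $\ncalZ_b^{\text{ce}}(\chi_e,a)$ is negative and \emph{non-decreasing} in $a$, which reverses your sandwich to $\ncalZ_b^{\text{ce}}(\chi_e,\tau^{-2})\geq\ncalZ_b^{\text{ce}}(\chi_e,v^{-1})\geq\ncalZ_b^{\text{ce}}(\chi_e,1)$. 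The paper's proof treats this case explicitly and argues that the integrand of \eqref{eq:MetaDisMoment_CE_Sigma0} simultaneously switches from non-increasing to non-decreasing in $\ncalZ_b^{\text{ce}}$, so that the two reversals cancel and \eqref{eq:MetaDisMoment_CE_Sigma0_Bounds} keeps the same orientation; at a minimum you need to reproduce that case split. It is also worth checking rather than asserting: on the domain where the $u$-integral in the proof of Theorem \ref{thm:MetaDisMoment_CE} converges one has $1+v\ncalZ_b^{\text{ce}}(\chi_e,v^{-1})>0$, and there $z\mapsto(1+vz)^{-2}$ is decreasing regardless of the sign of $b$ (e.g., for $b=-1$, $\alpha=4$ one computes $\ncalZ_{-1}^{\text{ce}}(\chi_e,a)=-\chi_e/a$), in which case the two integrals in \eqref{eq:MetaDisMoment_CE_Sigma0_Bounds} still bracket $M_b^{\text{ce}}$ but with their roles as upper and lower bounds exchanged. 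Your proof, as written, neither establishes nor flags this subtlety.
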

\begin{proof}
From Eq. \eqref{eq:Ze}, we can observe that $\ncalZ_b^{\text{ce}}(\chi_e,v^{-1})$ is a positive and non-decreasing function of $v$ for $b>0$, whereas $\ncalZ_b^{\text{ce}}(\chi_e,v^{-1})$ is a negative and non-increasing function of $v$ for $b<0$. 
Therefore, for $\tau^2\leq v\leq 1$ (see Eq \eqref{eq:MetaDisMoment_CE_Sigma0}), we have $$~~~~\ncalZ_b^{\text{ce}}(\chi_e,\tau^{-2})\leq\ncalZ_b^{\text{ce}}(\chi_e,v^{-1})\leq\ncalZ_b^{\text{ce}}(\chi_e,1)~\text{for}~ b>0$$
$$\text{and}~\ncalZ_b^{\text{ce}}(\chi_e,\tau^{-2})\geq\ncalZ_b^{\text{ce}}(\chi_e,v^{-1})\geq\ncalZ_b^{\text{ce}}(\chi_e,1)~\text{for}~b<0.$$ Now, note that Eq. \eqref{eq:MetaDisMoment_CE_Sigma0} is a non-increasing function w.r.t $\ncalZ_b^{\text{ce}}(\chi_e,v^{-1})$ when $b>0$, whereas Eq. \eqref{eq:MetaDisMoment_CE_Sigma0} is a non-decreasing function  w.r.t $\ncalZ_b^{\text{ce}}(\chi_e,v^{-1})$ when $b<0$.  
 Therefore, by replacing $\ncalZ_b^{\text{ce}}(\chi_e,v^{-1})$ with $\ncalZ_b^{\text{ce}}(\chi_e,1)$ and $\ncalZ_b^{\text{ce}}(\chi_e,\tau^{-2})$ in Eq. \eqref{eq:MetaDisMoment_CE_Sigma0}, we obtain the bounds on the $b$-th moment given in Eq. \eqref{eq:MetaDisMoment_CE_Sigma0_Bounds}. This completes the proof. 
\end{proof}\vspace{-.65cm}
\subsection{Meta Distribution for CC and CE users under OMA}
\label{subsec:OrthogonlaTransmission}
In OMA, each BS serves its associated users using orthogonal RBs which means that there is no intra-cell interference.  Thus, OMA provides better success probabilities for the CC and CE users compared to NOMA. However, this reduces the transmission instances, depending on the scheduling type, for the CC and CE users which degrades their transmission rates. The following corollary presents the $b$-th moment of meta distribution for the CC and CE users under OMA.
\begin{cor}[Moments for CC and CE users under OMA]
\label{cor:OrthogonalTransmission}
The $b$-th moment of the meta distribution for the typical CC user under OMA is  
\begin{equation}
\tilde{M}_b^{\text{cc}}(\beta_c,\tau)=M_b^{\text{cc}}(\beta_c,\tau),
\end{equation}
where $M_b^{\text{cc}}(\beta_c,\tau)$ is given by Eq. \eqref{eq:MetaDisMoment_CC_Sigma0}.  The $b$-th moment of the meta distribution for the typical CE user under OMA is 
 \begin{equation}
 \tilde{M}_b^{\text{ce}}(\beta_e,\tau)=M_b^{\text{ce}}(\beta_e,\tau),
 \end{equation} 
 where $M_b^{\text{ce}}(\beta_e,\tau)$ is given Eq. \eqref{eq:MetaDisMoment_CE_Sigma0}. Further, the simplified expressions for the bounds on the $b$-th moment of the typical CE user can be obtained by  setting $\chi_e=\beta_e$ in Eq. \eqref{eq:MetaDisMoment_CE_Sigma0_Bounds}.  
\end{cor}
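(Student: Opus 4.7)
The plan is to reuse the derivations of Theorems \ref{thm:MetaDisMoment_CC} and \ref{thm:MetaDisMoment_CE} essentially verbatim, observing that under OMA the absence of intra-cell interference collapses the effective thresholds $\chi_c$ and $\chi_e$ into $\beta_c$ and $\beta_e$, respectively, while the user classification rule from Eq. \eqref{eq:UserCategorization} and hence Lemma \ref{lemma:DistanceDistribution} are untouched (the CC/CE partition depends only on the BS geometry, not on the multiple-access scheme).

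First I would write the OMA SIRs explicitly. Since a BS serves at most one user per RB under OMA, the signal of interest sees only inter-cell interference, giving
\begin{equation*}
\widetilde{\sir}_{cc} = \widetilde{\sir}_{ee} = \frac{h_{\x_o} R_o^{-\alpha}}{I_{\Phi_I}}.
\end{equation*}
Hence the conditional success events reduce to
\begin{equation*}
\widetilde{\ncalE}_c = \{h_{\x_o} \geq R_o^\alpha I_{\Phi_I}\beta_c\}, \qquad \widetilde{\ncalE}_e = \{h_{\x_o} \geq R_o^\alpha I_{\Phi_I}\beta_e\},
\end{equation*}
which are formally identical to the NOMA success events in Eqs. \eqref{eq:SuccessEevent_CC} and \eqref{eq:SuccessEevent_CE} with $\chi_c$ and $\chi_e$ replaced by $\beta_c$ and $\beta_e$, respectively.

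Next I would carry out the $b$-th moment computation. Since the CC/CE classification is driven purely by the ratio $R_o/R_d$, the joint densities \eqref{eq:pdf_RoRd_CC}--\eqref{eq:pdf_RoRd_CE} and the marginal \eqref{eq:pdfRo_CC} apply without change. For the CC user, conditioning on $\Phi$, exploiting $h_\x \sim \exp(1)$ and the independence of the fading gains yields
\begin{equation*}
\widetilde{\p}_c(\beta_c\mid\Phi) = \prod_{\x\in\Phi_I}\frac{1}{1+R_o^\alpha\beta_c\|\x\|^{-\alpha}},
\end{equation*}
and the same $\pgfl$ argument over $\nbbR^2\setminus\ncalB_o(R_o/\tau)$ followed by the marginal of $R_o$ in Eq. \eqref{eq:pdfRo_CC} produces $\tilde M_b^{\text{cc}}(\beta_c,\tau)=M_b^{\text{cc}}(\beta_c,\tau)$. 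For the CE user, splitting $I_{\Phi_I}=h_{\x_d}R_d^{-\alpha}+I_{\tilde\Phi_I}$, applying the $\pgfl$ over $\nbbR^2\setminus\ncalB_o(R_d)$, and then averaging with \eqref{eq:pdf_RoRd_CE} via the same change of variables used in Theorem \ref{thm:MetaDisMoment_CE} gives $\tilde M_b^{\text{ce}}(\beta_e,\tau)=M_b^{\text{ce}}(\beta_e,\tau)$. The bounds in \eqref{eq:MetaDisMoment_CE_Sigma0_Bounds} transfer immediately by the same monotonicity of $\ncalZ_b^{\text{ce}}(\cdot,v^{-1})$ in $v$ that was used in Corollary \ref{cor:Sigma0_CE}.

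There is no substantive obstacle here: the entire content of the corollary is the observation that the NOMA success events are already of the single-threshold form $\{h_{\x_o}\geq R_o^\alpha I_{\Phi_I}\chi\}$, so removing intra-cell interference is equivalent to the substitution $\chi\mapsto\beta$. The only caution I would flag while writing the proof is to note explicitly that Lemma \ref{lemma:DistanceDistribution} is independent of whether NOMA or OMA is used, so that the conditional densities employed in the two theorems carry over verbatim and no re-derivation is needed.
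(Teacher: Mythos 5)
Your proposal is correct and follows essentially the same route as the paper: both observe that under OMA the success events reduce to the single-threshold form $\{h_{\x_o}\geq R_o^\alpha I_{\Phi_I}\beta\}$, i.e., the NOMA events with $\chi_c=\beta_c$ and $\chi_e=\beta_e$, so Theorems~\ref{thm:MetaDisMoment_CC} and~\ref{thm:MetaDisMoment_CE} and Corollary~\ref{cor:Sigma0_CE} apply verbatim. Your explicit remark that Lemma~\ref{lemma:DistanceDistribution} is independent of the multiple-access scheme is a useful clarification but does not change the argument.
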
 
\begin{proof}
The success probabilities for the typical CC  and CE users under OMA can be written as
$$~~~~\tilde{\p}_c(\beta_c)=\P\left[h_{\x_o}\geq R_o^\alpha I_{\Phi_I}\beta_c\right]$$
$$\text{and}~\tilde{\p}_e(\beta_e)=\P\left[h_{\x_o}\geq R_o^\alpha I_{\Phi_I}\beta_e\right],$$
respectively, which are equivalent to those in the case of NOMA with $\chi_c=\beta_c$ and $\chi_e=\beta_e$. Thus, the proof directly follows from Theorem \ref{thm:MetaDisMoment_CC}, Theorem \ref{thm:MetaDisMoment_CE} and Corollary \ref{cor:Sigma0_CE}.
\end{proof}\vspace{-.75cm}
\subsection{Beta Approximation}
\label{subsec:BetaApproximation}
Using the Gil-Pelaez's inversion theorem \cite{Gil1951} and the moments derived above, we can obtain the exact meta distributions for the typical CC  and CE users.  However, the evaluation of Gil-Pelaez integral is computationally complex.  Therefore, similar to \cite{Martin2016Meta}, we approximate the meta distribution using the beta distribution by matching the means and variances. Thus, the approximated meta distributions for the CC and CE users under NOMA respectively become
\begin{align}
\hspace{-.25cm}\bar{F}_\text{cc}(\chi_c;x)=I_{x}(\mu_1^{cc},\mu_{2}^{cc}) \text{ and } \bar{F}_\text{ce}(\chi_e;x)=I_{x}(\mu_1^{ce},\mu_2^{ce}), 
\end{align}
where $I_x(\cdot,\cdot)$ is a regularized incomplete beta function, $$\mu_1^{ss}= \frac{M_1^{ss}\mu_2^{ss}}{1-M_1^{ss}}\text{ and } \mu_2^{ss}=\frac{(M_1^{ss}-M_2^{ss})(1-M_1^{ss})}{M_2^{ss}-(M_1^{ss})^2}$$ such that $ss=\text{ce}$ for the CC case and $ss=\text{ce}$ for the CE case.
Similarly, the meta distribution for the CC and CE users under OMA can be approximated using the moments given in Corollary \ref{cor:OrthogonalTransmission}.\vspace{-.5cm}
%Now, using the moments of the meta distribution, we characterize the important key metrics of the cellular links in the following subsections.
\subsection{Mean Local Delay}
\label{subsec:MeanLocalDelay}
The first inverse moment of the conditional success probability is the {\em mean local delay} which is nothing but the mean number of transmissions required for a successful delivery of the packet when the transmitter retransmits after each failed transmission \cite{Haenggi2013local}.  
Thus, using Theorem \ref{thm:MetaDisMoment_CC}, Theorem \ref{thm:MetaDisMoment_CE} and Corollary \ref{cor:OrthogonalTransmission}, we present the mean local delays of the CC and CE users in the following corollary.
\begin{cor}[Mean local delay]
The mean local delays of the CC user under NOMA and OMA are
\begin{align}
M_{-1}^{\text{cc}}(\chi_c,\tau)&=\frac{1}{1-\frac{\delta}{1-\delta}\chi_c^{1-\delta}\tau^\alpha}\\
\text{and~~}~\tilde{M}_{-1}^{\text{cc}}(\beta_c,\tau)&=\frac{1}{1-\frac{\delta}{1-\delta}\beta_c^{1-\delta}\tau^\alpha},
\end{align}
respectively. The exact expression and bounds of the mean local delay for the CE user under NOMA can be obtained by setting $b=-1$ in Eqs. \eqref{eq:MetaDisMoment_CE_Sigma0} and  \eqref{eq:MetaDisMoment_CE_Sigma0_Bounds}, respectively.  Similarly, the exact expression and bounds of the mean local delay for the CE user under OMA can be obtained by setting $b=-1$ and $\chi_e=\beta_e$ in Eqs. \eqref{eq:MetaDisMoment_CE_Sigma0} and  \eqref{eq:MetaDisMoment_CE_Sigma0_Bounds}, respectively.  
\end{cor}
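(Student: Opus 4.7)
The plan is to specialize Theorem \ref{thm:MetaDisMoment_CC}, Theorem \ref{thm:MetaDisMoment_CE}, and Corollary \ref{cor:OrthogonalTransmission} to $b=-1$, which identifies the mean local delay with the first inverse moment of the conditional success probability. The only case needing genuine computation is the CC-under-NOMA case, where the integral defining $\ncalZ_b^{\text{cc}}$ reduces to an elementary power-law integral; the remaining three cases are direct substitutions.

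First I would set $b=-1$ in \eqref{eq:Zc}. Because $(1+t^{-1/\delta})^{-b}=1+t^{-1/\delta}$ in this regime, the bracketed integrand collapses to $-t^{-1/\delta}$, and since $\alpha>2$ implies $1/\delta=\alpha/2>1$, the tail integral $\int_{\chi_c^{-\delta}\tau^{-2}}^{\infty} t^{-1/\delta}\,{\rm d}t$ converges by the usual power-law criterion. A direct antiderivative computation then yields a closed form for $\ncalZ_{-1}^{\text{cc}}(\chi_c,\tau)$ in terms of $\delta$, $\chi_c$, and $\tau^{\alpha-2}$. Substituting this into \eqref{eq:MetaDisMoment_CC_Sigma0} and combining the outer $\tau^2$ factor with $\tau^{\alpha-2}$ to produce a $\tau^\alpha$ gives the stated rational expression for $M_{-1}^{\text{cc}}(\chi_c,\tau)$. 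Corollary \ref{cor:OrthogonalTransmission} then immediately extends this to the OMA CC case by replacing $\chi_c$ with $\beta_c$.

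For the CE user under NOMA, the stated expression is obtained by plugging $b=-1$ directly into \eqref{eq:MetaDisMoment_CE_Sigma0}; this case does not simplify further because of the outer integral over $v$ together with the nested tail integral inside $\ncalZ_b^{\text{ce}}$. The OMA CE analogue follows by additionally replacing $\chi_e$ with $\beta_e$, again appealing to Corollary \ref{cor:OrthogonalTransmission}. The bounds for the CE mean local delay follow identically by specializing \eqref{eq:MetaDisMoment_CE_Sigma0_Bounds} at $b=-1$; the monotonicity of $\ncalZ_{-1}^{\text{ce}}(\chi_e,v^{-1})$ in $v$ used in Corollary \ref{cor:Sigma0_CE} remains valid at $b=-1$, so the same bracketing arguments carry over without modification.

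I do not anticipate any significant obstacle here, as the entire corollary is a specialization of results already proved in this section. The one subtlety worth flagging is the finiteness regime: the closed-form denominator for the CC case is positive only in a subregion of parameter space, outside of which the typical CC user's mean local delay diverges. This is the standard phase transition for local delays in Poisson networks and reflects the interference-limited nature of the setup; the same phenomenon implicitly constrains the CE expressions obtained by the same procedure, but need not be made explicit since those expressions remain in integral form.
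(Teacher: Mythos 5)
Your approach is exactly the one the paper intends: the corollary is a pure specialization of Theorems \ref{thm:MetaDisMoment_CC} and \ref{thm:MetaDisMoment_CE} and Corollary \ref{cor:OrthogonalTransmission} at $b=-1$; the CE and OMA parts are indeed direct substitutions, and the monotonicity argument of Corollary \ref{cor:Sigma0_CE} explicitly covers $b<0$, so the bounds carry over as you say. The one place where you defer to ``a direct antiderivative computation,'' however, is exactly where the trouble lies, and your claim that it ``gives the stated rational expression'' does not survive being carried out. Setting $b=-1$ in \eqref{eq:Zc} gives, as you note, the integrand $-t^{-1/\delta}$, and
\begin{equation*}
\ncalZ_{-1}^{\text{cc}}(\chi_c,\tau)
=-\chi_c^{\delta}\int_{\chi_c^{-\delta}\tau^{-2}}^{\infty}t^{-1/\delta}\,{\rm d}t
=-\chi_c^{\delta}\cdot\frac{\delta}{1-\delta}\,\chi_c^{1-\delta}\,\tau^{\alpha-2}
=-\frac{\delta}{1-\delta}\,\chi_c\,\tau^{\alpha-2},
\end{equation*}
because the prefactor $\chi_c^{\delta}$ in \eqref{eq:Zc} cancels the factor $\chi_c^{1-\delta}$ produced by the tail integral. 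Substituting into \eqref{eq:MetaDisMoment_CC_Sigma0} then yields $M_{-1}^{\text{cc}}(\chi_c,\tau)=\bigl(1-\frac{\delta}{1-\delta}\chi_c\tau^{\alpha}\bigr)^{-1}$, i.e.\ with $\chi_c$ to the first power rather than the $\chi_c^{1-\delta}$ appearing in the statement. A sanity check confirms this: at $\tau=1$ the CC user becomes the ordinary typical user of a Poisson cellular network, whose mean local delay is the classical $\bigl(1-\frac{\delta\beta}{1-\delta}\bigr)^{-1}$ with the threshold to the first power, matching the corrected expression and not the stated one. So either the displayed formula in the corollary carries a typo or you have implicitly reproduced it without checking; either way, a complete proof must exhibit the antiderivative step and resolve the exponent, and the same remark applies verbatim to the OMA expression with $\beta_c$ in place of $\chi_c$.

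Everything else in your proposal is fine, and your closing remark about the finiteness regime is a genuine and correct observation: the closed form is a valid (finite) mean local delay only when $\frac{\delta}{1-\delta}\chi_c\tau^{\alpha}<1$, which is the usual local-delay phase transition and deserves to be stated alongside the corollary.
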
\vspace{-.5cm}
\subsection{Cell Throughput}
\label{subsec:TransmissionRate}
The transmission rates of the CC and CE users can be determined by using the means of their meta distributions. Therefore, the cell throughput under NOMA becomes $\ncalR_{\text{cell}}(\tau)=$
\begin{align}
\log_2(1+\beta_c)M_1^{\text{cc}}(\chi_c,\tau) + \log_2(1+\beta_e)M_1^{\text{ce}}(\chi_e,\tau).
\end{align}
In addition, due to time sharing of RBs, the cell throughput under OMA becomes
\begin{align}
\tilde{\ncalR}_{\text{cell}}(\tau)&=\rho\log_2(1+\beta_c)\tilde{M}_1^{\text{cc}}(\beta_c,\tau) \nonumber\\
&~+ (1-\rho)\log_2(1+\beta_e)\tilde{M}_1^{\text{ce}}(\beta_e,\tau),
\end{align}
where $\rho$ is the fraction of time the CC user is scheduled.\vspace{-.3cm}
\begin{figure*}[h]
 \centering
 \includegraphics[width=.32\textwidth]{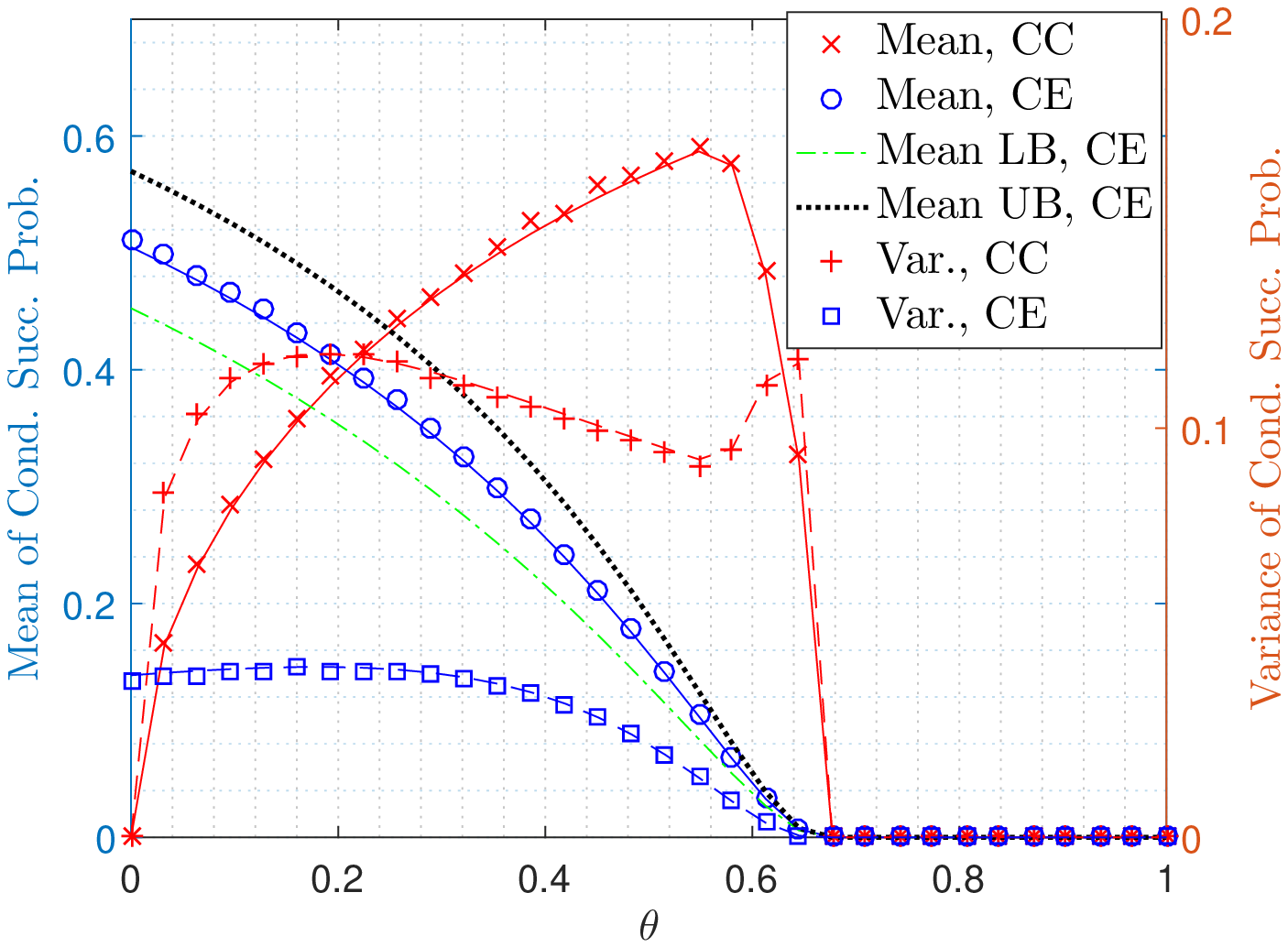}
\includegraphics[width=.32\textwidth]{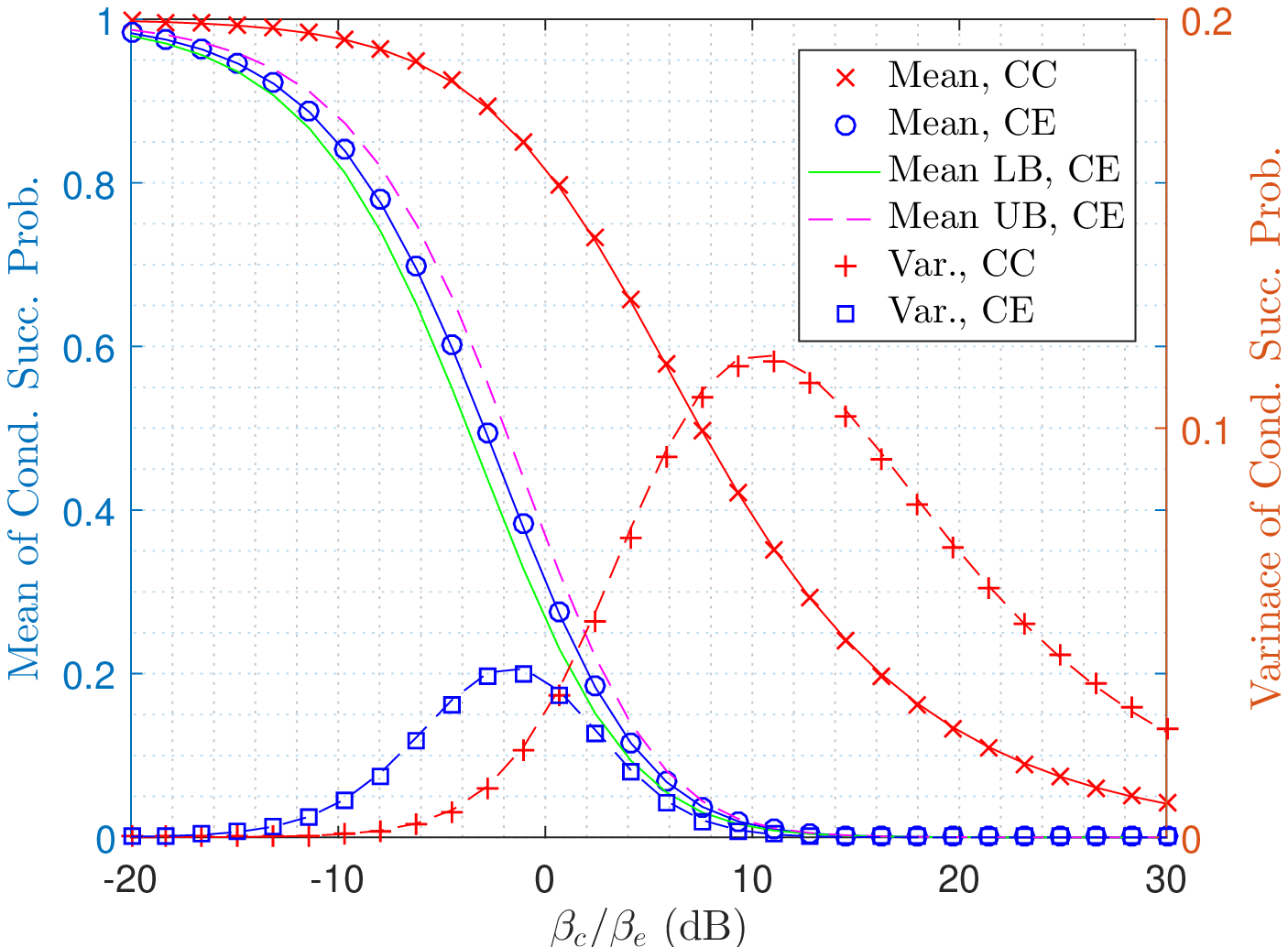}
 \includegraphics[width=.32\textwidth]{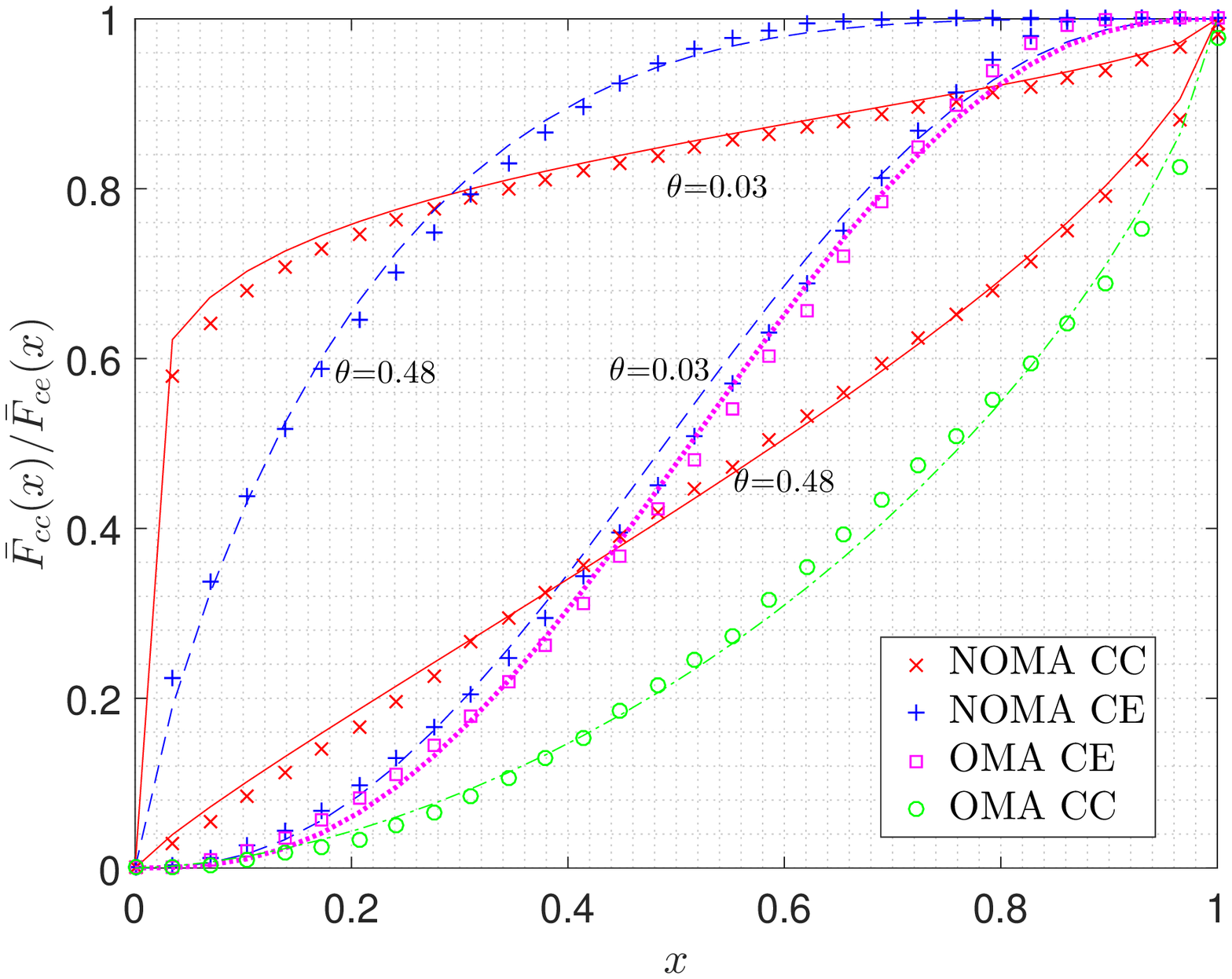}\vspace{-.4cm}
\caption{Moments for the CC and CE users under NOMA (Left) and OMA (Middle). Beta approximation of the meta distribution (Right). LB and UB respectively denotes the lower and upper bounds. The solid and dashed curves correspond to analytical results and markers correspond to simulation results.}\vspace{-.55cm}
\label{fig:Moments_NOMA_OMA_BetaApp}
\end{figure*}
%\begin{figure}[h]
% \centering\vspace{-.4cm}
% \includegraphics[width=.45\textwidth]{Moments_MetaDistributionNOMA.eps}
%\includegraphics[width=.45\textwidth]{Moments_MetaDistributionOMA.eps}
%\caption{Moments for the CC and CE users under NOMA (Top) and  OMA (Bottom). The solid and dashed curves correspond to analytical results and the markers correspond to the simulation results.}\vspace{-.5cm}
%\label{fig:Moments_NOMA_OMA}
%\end{figure}
\section{Numerical Results}
 \label{sec:NumericalResults}
In order to verify the analytical results and obtain design insights, we consider the system parameters as $\tau=0.7$, $\alpha=4$, $\lambda_B=1$, $\lambda_U\gg \lambda_B$ (such that each cell can form at least one pair of the CC and CE users) and $(\beta_c,\beta_e)=(3,-3)$ dB, unless mentioned otherwise. Fig. \ref{fig:Moments_NOMA_OMA_BetaApp} (Left) verifies the analysis of  the means and variances of the meta distributions for the CC and CE users under NOMA. 
The moments  for the CE user monotonically decrease with $\theta$ 
since the power allocated to $\mathtt{L_C}$ and $\mathtt{L_E}$ layers negatively affects the success probability for the CE user with increasing $\theta$.
However, the behavior is reversed for the moments of the CC user. This is because while increasing $\theta$ makes it difficult to decode $\mathtt{L_E}$ layer, it also makes it easier to decode $\mathtt{L_C}$ layer at the CC user, which turns out to be the dominant of the two effects in this regime. %Note that there is higher margin for the intra-cell interference at the CC user since it experiences smaller path-loss from the serving BS and inter-cell interference power. 
%Besides, it can be seen that the moments for the CC and CE users reduces to zero around $\theta=(1+\beta_e)^{-1}$ which verifies the necessary conditions for the NOMA which are given in \eqref{eq:NC_NOMA}. 
Fig. \ref{fig:Moments_NOMA_OMA_BetaApp} (Middle) verifies the means and variances of the meta distributions for the CC and CE users under OMA. Fig. \ref{fig:Moments_NOMA_OMA_BetaApp} (Left and Middle) also depicts  that the bounds of the mean of the meta distribution (or, the success probability) for the CE user are tight.

Fig. \ref{fig:Moments_NOMA_OMA_BetaApp} (Right) shows that the beta distributions closely approximate the meta distributions for the CC and CE users. Hence, the proposed beta approximations can be used for the system-level analysis of NOMA  without relying on the evaluation of the Gil-Pelaez integrals. 

%\begin{figure}[h]
% \centering
%\includegraphics[width=.45\textwidth]{BetaApproximationNOMA_OMA.eps} \vspace{-.5cm}
%\caption{Beta approximation of the meta distribution. The solid and dashed curves correspond to beta approximations and the markers correspond to the simulation results.}
%\label{fig:BetaApp}
%\end{figure}

\begin{figure}[h]
 \centering
 \vspace{-.4cm}
 \includegraphics[width=.45\textwidth]{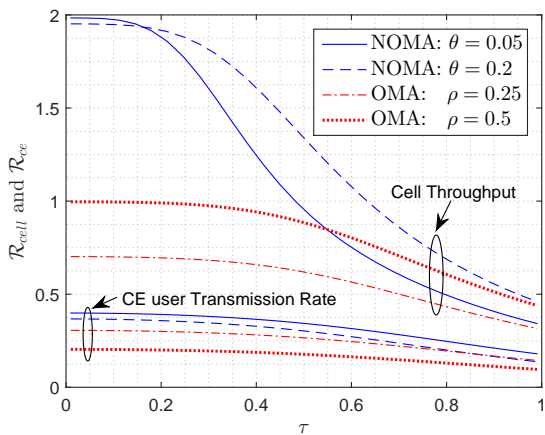}\vspace{-.35cm}
\caption{Cell throughput and CE user's transmission rate.}
\label{fig:CellSumRate_Comparision}\vspace{-.25cm}
\end{figure}
Fig. \ref{fig:CellSumRate_Comparision} shows that both the cell throughput and CE user's transmission rate decrease with $\tau$.
This is because of the success probabilities of both the CC and CE users  degrade with the increase of $\tau$ for given $\theta$ or $\rho$ because of the increase in the inter-cell interference power. We also observe that NOMA can ensure better cell throughput along with improved CE user transmission rate  compared to OMA. It can be seen that the CE user transmission rate increases and the cell throughput decreases as $\theta$ ($\rho$) decreases for NOMA (OMA). Besides, note that decreasing $\theta$ beyond a certain point does not improve the CE user's transmission rate since the success probability of the CE user is limited by the inter-cell interference as $\theta\to 0$.\vspace{-.35cm}
\section{Conclusion}
This paper has provided a comprehensive analysis of downlink two-user NOMA enabled cellular networks. In particular, a new 3GPP-inspired user ranking technique has been proposed wherein the CC and CE users are paired for the non-orthogonal transmission. The CC and CE users are characterized based on the path-losses from the serving and dominant interfering BSs. 
Unlike the ranking techniques used in the literature, the proposed technique ranks users accurately with distinct link qualities  which is important to obtain performance gains in NOMA. The exact expressions have been derived for the moments of the meta distributions for the CC and CE users  under NOMA and OMA. We also provided  tight beta approximations for the meta distributions of the CC and CE users under NOMA and OMA. In addition, we also presented the exact expressions for the mean local delays and the cell throughput. The numerical results demonstrated that NOMA along with the proposed user ranking technique results in a significantly higher cell throughput and CE users' transmission rate compared to OMA.\vspace{-.25cm}
% \begin{figure*}[h]
% \centering
% \includegraphics[width=.65\textwidth]{CovP_CCUCEU_OMA.eps}
%\includegraphics[width=.65\textwidth]{CovP_CCUCEU_OMA_2.eps}
%\label{fig:CovP_CC_CE}
%\caption{Validation of Coverage Probabilities of CC user and CE user for $\tau=0.8$ (top figure) and $\tau=0.7$ (bottom figure) under orthogonal transmission.}
%\label{fig:CDF_Ro_R1_pcf}
%\end{figure*}
%
%
%%
%%\bibliographystyle{IEEEtran}
%%\bibliography{MasterBibFile}

\begin{thebibliography}{10}
\providecommand{\url}[1]{#1}
\csname url@samestyle\endcsname
\providecommand{\newblock}{\relax}
\providecommand{\bibinfo}[2]{#2}
\providecommand{\BIBentrySTDinterwordspacing}{\spaceskip=0pt\relax}
\providecommand{\BIBentryALTinterwordstretchfactor}{4}
\providecommand{\BIBentryALTinterwordspacing}{\spaceskip=\fontdimen2\font plus
\BIBentryALTinterwordstretchfactor\fontdimen3\font minus
  \fontdimen4\font\relax}
\providecommand{\BIBforeignlanguage}[2]{{%
\expandafter\ifx\csname l@#1\endcsname\relax
\typeout{** WARNING: IEEEtran.bst: No hyphenation pattern has been}%
\typeout{** loaded for the language `#1'. Using the pattern for}%
\typeout{** the default language instead.}%
\else
\language=\csname l@#1\endcsname
\fi
#2}}
\providecommand{\BIBdecl}{\relax}
\BIBdecl

\bibitem{ding2017survey}
Z.~Ding, X.~Lei, G.~K. Karagiannidis, R.~Schober, J.~Yuan, and V.~K. Bhargava,
  ``A survey on non-orthogonal multiple access for {5G} networks: Research
  challenges and future trends,'' \emph{IEEE J. Sel. Areas Commun.}, vol.~35,
  no.~10, pp. 2181--2195, 2017.

\bibitem{ali2019downlink}
K.~S. Ali, M.~Haenggi, H.~ElSawy, A.~Chaaban, and M.-S. Alouini, ``Downlink
  non-orthogonal multiple access {(NOMA)} in {P}oisson networks,'' \emph{IEEE
  Trans. Commun.}, vol.~67, no.~2, pp. 1613--1628, Feb. 2019.

\bibitem{ali2019downlinkLetter}
K.~{Ali}, H.~{Elsawy}, and M.~{Alouini}, ``Meta distribution of downlink
  non-orthogonal multiple access {(NOMA)} in {Poisson} networks,'' \emph{IEEE
  Wireless Commun. Lett.}, vol.~8, no.~2, pp. 572--575, April 2019.

\bibitem{salehi2018meta}
M.~{Salehi}, H.~{Tabassum}, and E.~{Hossain}, ``Meta distribution of sir in
  large-scale uplink and downlink {NOMA} networks,'' \emph{IEEE Trans.
  Commun.}, vol.~67, no.~4, pp. 3009--3025, April 2019.

\bibitem{salehi2018accuracy}
------, ``Accuracy of distance-based ranking of users in the analysis of {NOMA}
  systems,'' \emph{IEEE Trans. Commun.}, vol.~67, no.~7, pp. 5069--5083, July
  2019.

\bibitem{ding2016impact}
Z.~Ding, P.~Fan, and H.~V. Poor, ``Impact of user pairing on {5G} nonorthogonal
  multiple-access downlink transmissions,'' \emph{IEEE Trans. Veh. Technol.},
  vol.~65, no.~8, pp. 6010--6023, Aug. 2016.

\bibitem{Martin2016Meta}
M.~{Haenggi}, ``The meta distribution of the {SIR} in {P}oisson bipolar and
  cellular networks,'' \emph{IEEE Trans. Wireless Commun.}, vol.~15, no.~4, pp.
  2577--2589, April 2016.

\bibitem{liu2017non}
Y.~Liu, Z.~Qin, M.~Elkashlan, A.~Nallanathan, and J.~A. McCann,
  ``Non-orthogonal multiple access in large-scale heterogeneous networks,''
  \emph{IEEE J. Sel. Areas Commun.}, vol.~35, no.~12, pp. 2667--2680, Dec.
  2017.

\bibitem{zhang2017downlink}
Z.~Zhang, H.~Sun, and R.~Q. Hu, ``Downlink and uplink non-orthogonal multiple
  access in a dense wireless network,'' \emph{IEEE J. Sel. Areas Commun.},
  vol.~35, no.~12, pp. 2771--2784, Dec. 2017.

\bibitem{dominique2010self}
F.~Dominique, C.~G. Gerlach, N.~Gopalakrishnan, A.~Rao, J.~P. Seymour, R.~Soni,
  A.~Stolyar, H.~Viswanathan, C.~Weaver, and A.~Weber, ``Self-organizing
  interference management for {LTE},'' \emph{Bell Labs Technical Journal},
  vol.~15, no.~3, pp. 19--42, Dec. 2010.

\bibitem{wildemeersch2014successive}
M.~Wildemeersch, T.~Q. Quek, M.~Kountouris, A.~Rabbachin, and C.~H. Slump,
  ``Successive interference cancellation in heterogeneous networks,''
  \emph{IEEE Trans. Commun.}, vol.~62, no.~12, pp. 4440--4453, Dec. 2014.

\bibitem{Vishnu2017UAV}
V.~V. {Chetlur} and H.~S. {Dhillon}, ``Downlink coverage analysis for a finite
  3-{D} wireless network of unmanned aerial vehicles,'' \emph{IEEE Trans.
  Commun.}, vol.~65, no.~10, pp. 4543--4558, Oct. 2017.

\bibitem{Praful2018LoadAware}
P.~D. {Mankar}, G.~{Das}, and S.~S. {Pathak}, ``Load-aware performance analysis
  of cell center/edge users in random {HetNets},'' \emph{IEEE Trans. Veh.
  Technol.}, vol.~67, no.~3, pp. 2476--2490, March 2018.

\bibitem{Dhillon2013NonUni}
H.~S. {Dhillon}, R.~K. {Ganti}, and J.~G. {Andrews}, ``Modeling non-uniform ue
  distributions in downlink cellular networks,'' \emph{IEEE Wireless Commun.
  Lett.}, vol.~2, no.~3, pp. 339--342, June 2013.

\bibitem{Gil1951}
J.~Gil-Pelaez, ``Note on the inversion theorem,'' \emph{Biometrika}, vol.~38,
  no. 3-4, pp. 481--482, 1951.

\bibitem{Haenggi2013local}
M.~Haenggi, ``The local delay in {P}oisson networks,'' \emph{IEEE Trans. Inf.
  Theory}, vol.~59, no.~3, pp. 1788--1802, March 2013.

\end{thebibliography}
% Generated by IEEEtran.bst, version: 1.13 (2008/09/30)

\end{document}